\newtheorem{lemma}{Lemma}
\newcommand{\x}{\mbox{\boldmath $x$}}
\newcommand{\n}{\mbox{\boldmath $n$}}
\newcommand{\she}{\mbox{\boldmath ${h}$}}
\newcommand{\vzero}{\mbox{\boldmath $0$}}
\newcommand{\y}{\mbox{\boldmath $y$}}
\newcommand{\h}{\mbox{\boldmath $h$}}
\newcommand{\calH}{\mbox{\boldmath ${\mathcal H}$}}
\newcommand{\calX}{\mbox{\boldmath ${\mathcal X}$}}
\newcommand{\calY}{\mbox{\boldmath ${\mathcal Y}$}}
\newcommand{\calN}{\mbox{\boldmath ${\mathcal N}$}}
\newcommand{\beqn}{\begin{eqnarray*}}
\newcommand{\eeqn}{\end{eqnarray*}}
\newcommand{\II}{\mbox{\boldmath $I$}}
\newcommand{\QQ}{\mbox{\boldmath $Q$}}
\newcommand{\BB}{\mbox{\boldmath $B$}}
\newcommand{\BSigma}{\mbox{\boldmath $\Sigma$}}
\newcommand{\BD}{\mbox{\boldmath $D$}}
\newcommand{\A}{\mbox{\boldmath $A$}}
\newcommand{\BA}{\mbox{\boldmath $A$}}
\newcommand{\BF}{\mbox{\boldmath $F$}}
\newcommand{\BG}{\mbox{\boldmath $G$}}
\newcommand{\BP}{\mbox{\boldmath $P$}}
\newcommand{\BR}{\mbox{\boldmath $R$}}
\newcommand{\Ba}{\mbox{\boldmath $a$}}
\newcommand{\BX}{\mbox{\boldmath $X$}}
\newcommand{\BE}{\mbox{\boldmath $E$}}
\newcommand{\BW}{\mbox{\boldmath $W$}}
\newcommand{\BI}{\mbox{\boldmath $I$}}
\newcommand{\BU}{\mbox{\boldmath $U$}}
\newcommand{\Bg}{\mbox{\boldmath $g$}}
\newcommand{\bq}{\begin{eqnarray}}
\newcommand{\eq}{\end{eqnarray}}
\newcommand{\ba}{\begin{array}}
\newcommand{\ea}{\end{array}}
\newcommand{\La}{\mbox{\boldmath $\Lambda$}}
\begin{document}

\title{LOW COMPLEXITY BLIND EQUALIZATION FOR OFDM SYSTEMS WITH GENERAL CONSTELLATIONS}

\author{\emph{Tareq~Y.~Al-Naffouri$^{*1}$\footnote{$^*$corresponding author. Address: KFUPM, P.O.Box 1083, Dhahran 31261, Saudi Arabia. email: naffouri@kfupm.edu.sa,  tel. off.: +966 03 860 1030, fax: +966 03 860 3535. }, Ala Dahman$^1$, Muhammad S. Sohail$^1$, Weiyu Xu$^2$ and Babak Hassibi$^3$}\\
\large $^1$ Department of Electrical Engineering, King Fahd University of Petroleum and Minerals, Saudi Arabia\\ $^2$  Department of Electrical and Computer Engineering, University of Iowa, Iowa City, IA \\ $^3$ Department of Electrical Engineering, California Institute of Technology, Pasadena, CA}


\date{ }
\maketitle
\begin{abstract}
This paper proposes a low-complexity algorithm for blind equalization of data in OFDM-based wireless systems with general constellation. The proposed algorithm is able to recover data even when the channel changes on a symbol-by-symbol basis, making it suitable for fast fading channels. The proposed algorithm does not require any statistical information of the channel and thus does not suffer from latency normally associated with blind methods. We also demonstrate how to reduce the complexity of the algorithm, which becomes especially low at high SNR. Specifically, we show that in the high SNR regime, the number of operations is of the order $O(LN)$, where $L$ is the cyclic prefix length and $N$ is the total number of subcarriers. Simulation results confirm the favorable performance of our algorithm.
\end{abstract}

\begin{IEEEkeywords}
\noindent OFDM, channel estimation, maximum-likelihood detection, maximum a posteriori detection and recursive least squares.
\end{IEEEkeywords}


\section{Introduction} \label{sec:intro}

Modern wireless communication systems are expected to meet an ever increasing demand for high data rates. A major hindrance
for such high data rate systems is multipath fading. Orthogonal frequency division multiplexing (OFDM), owing to its robustness to multipath fading,  has been incorporated in many existing standards (e.g., IEEE $802.11$, IEEE $802.16$, DAB, DVB, HyperLAN, ADSL etc.) and is also a candidate for future wireless standards (e.g., IEEE $802.20$).
All current standards use pilot symbols to obtain channel state information (needed to perform coherent data detection).
This reduces the bandwidth available for data transmission, e.g., the IEEE $802.11$n standard uses $4$ subcarriers for pilots, that is $7.1\%$ of the available bandwidth, of the $56$ subcarriers available for transmission. Blind equalization methods are advantageous as they do not require regular training/pilots symbols, thus freeing up valuable bandwidth.

Several works exist in literature on blind channel estimation and equalization. A brief classification of these works based on a few commonly used constraints/assumptions is given in Table \ref{tableLitClass} (note that this list is not exhaustive).
Broadly speaking, the literature on blind channel estimation can be classified into maximum-likelihood (ML) methods and non-ML methods.

The non-ML methods include approaches based on subspace techniques \cite{IEEETVT:Shin2007}-\cite{IEEETSP:Muquet2002}, second-order statistics \cite{IEEETSP:Bolcskei2002}, \cite{IEEETSP:Gao2007}, \cite{IEEETSP:Muarkami2006}, Cholesky factorization \cite{IEEETSP:Choi2008}, iterative methods \cite{IEEETVT:Banani2010}, virtual carriers \cite{IEEETWC:Li2003} real signal characteristics \cite{IEEETVT:Gao2007} and linear precoding \cite{IEEETSP:Gao2007}, \cite{IEEETWC:Petropulu2004}. Subspace-based methods \cite{IEEETVT:Shin2007}-\cite{IEEETSP:Su2007}, \cite{IEEETSP:Zeng2004}-\cite{IEEETSP:Muquet2002} generally have lower complexity but suffer from slow convergence as they require many OFDM symbols to get an accurate estimate of the channel autocorrelation matrix. Blind methods based on second-order statistics \cite{IEEETSP:Bolcskei2002}, \cite{IEEETSP:Gao2007}, \cite{IEEETSP:Muarkami2006} also require the channel to be strictly stationary over several OFDM blocks. More often than not, this condition is not fulfilled in wireless scenarios (e.g., as in WLAN and fixed wireless applications). Methods based on Cholesky's factorization \cite{IEEETSP:Choi2008} and iterative techniques \cite{IEEETVT:Banani2010} suffer from high computational complexity.

Several ML-based blind methods have been proposed in literature \cite{IEEETSP:Sarmadi2009}, \cite{ICASSP:Weiyu2008}, \cite{IEEETSP:Ma2006}-\cite{IEEETSP:Yellin1993}, \cite{IEEETSP:Chang2008}. Although they incur a higher computational cost, their superior performance and faster convergence is very attractive. These characteristics make this class of algorithms suitable for block fading scenarios with short channel coherence time. Usually, suboptimal approximations are used to reduce the computational complexity of ML-based methods. Though these methods reduce the complexity of the exhaustive ML search, they still incur a significantly high computational cost. Some methods like \cite{IEEETSP:Ma2006}, \cite{IEEETSP:Larsson2003}, \cite{DSP:Stocia2003} are sensitive to initialization parameters, while others work only for specific constellations (see Table \ref{tableLitClass}). A few ML-based algorithms allow the channel to change on a symbol-by-symbol basis (e.g., \cite{IEEETSP:Naffouri2010, IEEETSP:Chang2008}), however, these algorithms are only able to deal with constant modulus constellations.

To the best of our knowledge no blind algorithm in literature is able to deal with channels that change from one OFDM symbol to another when the data symbols are drawn from a general constellation. Contrast this with the equalization algorithm presented in this paper. The key features of the blind equalization algorithm presented in this paper are that it
\begin{enumerate}
  \item works with an arbitrary constellation,
  \item can deal with channels that change for one symbol to the next,
  \item does not assume any statistical information about the channel.
\end{enumerate}
In addition, we propose a low-complexity implementation of the algorithm by utilizing the special structure of partial FFT matrices and prove that the complexity becomes especially low in the high SNR regime.

\begin{table*}
  \centering
  \caption{Literature Classification}\label{tableLitClass}
\begin{tabular}{|c|c|c|}
  \hline
  \textbf{Constraint} & Limited by & Not limited by\\
  \hline
  \hline
  & \cite{IEEETVT:Shin2007,IEEETSP:Moulines1995,IEEEWCNC:Tu2008,IEEETSP:Su2007,IEEETVT:Tu2010,IEEETSP:Zeng2004,IEEEJSAC:Gao2008,IEEETSP:Muquet2002}, & \\
  Channel constant over & \cite{IEEETSP:Bolcskei2002,IEEETSP:Gao2007,IEEETSP:Muarkami2006,IEEETSP:Choi2008,IEEETVT:Banani2010,IEEETWC:Li2003}, & \cite{IEEETSP:Naffouri2010,IEEETSP:Chang2008} \\
  $M$ symbols, $M>1$&\cite{IEEETVT:Gao2007,IEEETWC:Petropulu2004,IEEETSP:Sarmadi2009,IEEETSP:Ma2006,DSP:Stocia2003,IEEETC:Li2001,IEEETC:Cui2006} & \\
  \hline
  Uses pilots & \cite{IEEETVT:Shin2007,IEEETSP:Su2007,IEEEJSAC:Gao2008,IEEETSP:Muquet2002,IEEETSP:Bolcskei2002},
  & \\ 
  to resolve & 
  \cite{IEEETSP:Choi2008,IEEETVT:Banani2010,IEEETWC:Li2003,IEEETWC:Petropulu2004,IEEETSP:Sarmadi2009},
  & \cite{IEEETWC:Necker2004} \\
  phase  ambiguity &\cite{IEEETSP:Ma2006,IEEETC:Cui2006,IEEETSP:Ma2007b,IEEETSP:Naffouri2010,IEEETSP:Chang2008} & \\
  \hline
  &   \cite{IEEETSP:Moulines1995,IEEEWCNC:Tu2008,IEEETVT:Tu2010,IEEEJSAC:Gao2008,IEEETSP:Gao2007,IEEETSP:Muarkami2006}, & \\
  Constant modulus constellation & \cite{IEEETVT:Banani2010,IEEETWC:Li2003,IEEETSP:Sarmadi2009,IEEETSP:Ma2006}, & \cite{IEEETVT:Shin2007,IEEETSP:Su2007,IEEETSP:Zeng2004,IEEETSP:Bolcskei2002,IEEETSP:Choi2008}\\
  &   \cite{DSP:Stocia2003,IEEETSP:Naffouri2010,IEEETC:Li2001,IEEETWC:Necker2004,IEEETSP:Chang2008} &   \cite{IEEETVT:Gao2007,ICASSP:Weiyu2008,IEEETC:Cui2006} \\
  \hline
\end{tabular}
\end{table*}

The paper is organized as follows. Section \ref{System2} describes the system model and Section \ref{Desc2} describes the blind equalization algorithm. Section \ref{method2} presents an approximate method to reduce the computational complexity of the algorithm, while Section \ref{complexity2} evaluates this complexity in the high SNR regime. Section \ref{Sim2} presents the simulation results and Section \ref{con2} gives the concluding remarks.

\subsection{Notation}\label{notation2}

We denote scalars with small-case letters, $x$, vectors with small-case boldface letters, $\x$, while the individual entries of a vector $\she$ are denoted by $h(l)$. Upper case boldface letters, $\BX$, represent matrices while calligraphic notation, $\calX $, is reserved for vectors in the frequency domain. A hat over a variable indicates an estimate of the variable, e.g., $\hat{\h}$ is an estimate of $\h$. $(.)^{\rm T}$ and $(.)^{\rm H}$ denote the transpose and Hermitian operations, while the notation $\odot$ stands for element-by-element multiplication. The discrete Fourier transform (DFT) matrix is denoted by $\QQ$ and defined as ${q}_{l,k}={e^{-j\frac{2\pi}{N}(l-1)(k-1)}}$ with $k,l = 1,2,\cdots,N$ ($N$ is the number of subcarriers in the OFDM symbol), while the invrse DFT (IDFT) is denoted as $\QQ^{\rm H}$. The notation $\|\Ba\|^{2}_{\bf{B}}$ represents the weighted norm defined as  $\|\Ba\|^{2}_{\bf{B}}\stackrel{\Delta}{=} \Ba^{\rm H}{\mbox{\boldmath $B$}}\Ba$ for some vector $\Ba$ and matrix ${\mbox{\boldmath $B$}}$.

\section{System Model} \label{System2}

Consider an OFDM system where all the $N$ available subcarriers are modulated by data symbols chosen from an arbitrary constellation. The frequency-domain OFDM symbol $\calX$, of size $N \times 1$, undergoes an IDFT operation to produce the time-domain symbol $ {\x}$, i.e.
\begin{eqnarray}
\label{l'kdv'lkvbnfp}
{\x} = {\sqrt{N}}\QQ^{\rm H}  {\calX}.
\end{eqnarray}
The transmitter then appends a length $L$ cyclic prefix (CP) to $\x$ and transmits it over the channel. The channel $\h$, of maximum length $L + 1 < N$, is assumed to be constant for the duration of a single OFDM symbol, but could change from one symbol to the next. The received signal is a convolution of the transmitted signal with the channel observed in additive white circularly symmetric Gaussian noise ${\bf{n}} \sim \mathcal{N}(0,\BI)$. The CP converts the linear convolution relationship to circular convolution, which, in the frequency domain, reduces to an element-by-element operation. Discarding the CP, the frequency-domain received symbol is given by
\begin{eqnarray}
\label{iifhwr} \calY =  \sqrt{\rho}\; \calH \odot \calX + \calN,
\end{eqnarray}
where $\rho$ is the signal to noise ratio (SNR) and $\calY,$ $\calH,$ $\calX,$$\calN,$ are the $N$-point DFT's of $\y,$ $\h,$ $\x,$ and additive noise $\n$ respectively, i.e.
\begin{eqnarray}
\nonumber \label{'lkdgglblgg} \calH = \QQ\left[ \begin{array}{c} \she \\ {\bf{0}} \end{array}\right],\;\;\; \calX =
\frac{1}{\sqrt{N}}\QQ\x, \\\;\;\; \calN =
\frac{1}{\sqrt{N}}\QQ\n, \;
\mbox{and} \;\; \calY = \frac{1}{\sqrt{N}}\QQ\y.
\end{eqnarray}
Note that $\she$ is zero padded before taking its $N$-point DFT. Let $\A^{\rm H}$ consist of first $L+1$ columns of ${\QQ}$ (i.e., $\A$ consist of first $L+1$ rows of ${\QQ}^{\rm H}$), then
\begin{eqnarray}
\label{;OFpvdbf} \calH = \A^{\rm H}\h \;\;\; \mbox{and} \;\;\; \h = \A\calH.
\end{eqnarray}
This allows us to rewrite (\ref{iifhwr}) as
\begin{eqnarray}
\label{ifhwrr}  \calY =  \sqrt{\rho}\;\mbox{diag}(\calX)\A^{\rm H}\h + \calN.
\end{eqnarray}

\section{Blind Equalization Approach} \label{Desc2}

Consider the input/output equation (\ref{ifhwrr}), which in its element by element form reads
\begin{eqnarray}
\mathcal{Y}(j) = \sqrt{\rho}\;\mathcal{X}(j)\Ba_j^{\rm H}\h + \mathcal{N}(j)
\end{eqnarray}
where $\Ba_j$ is the $j$th column of $\BA$. The problem of joint ML channel estimation and data detection for OFDM channels can be cast as the following minimization problem
\begin{eqnarray}
J_{ML} &=&  \min_{h, \mathcal{X} \in \Omega^N} \|\calY - \sqrt{\rho}\; {\rm
diag}(\calX) \BA^{\rm H}\h\|^{2} \nonumber \\
&=& \min_{h, \mathcal{X} \in \Omega^N}
\sum_{i=1}^{N}|\mathcal{Y}(i) - \sqrt{\rho}\; \mathcal{X}(i)\Ba_i^{\rm H}\h|^{2} \nonumber \\ \nonumber
&=& \min_{h, \mathcal{X} \in \Omega^N} \left\{
\sum_{j=1}^{i}|\mathcal{Y}(j) - \sqrt{\rho}\;\mathcal{X}(j)\Ba_j^{\rm H}\h|^{2} + \right.\\ && \left.
\sum_{j=i+1}^{N}|\mathcal{Y}(j) - \sqrt{\rho}\;\mathcal{X}(j)\Ba_j^{\rm H}\h|^{2}
\right\} \label{obj}
\end{eqnarray}
where $\Omega^N$ denotes the set of all possible $N-$dimensional signal vectors. Let us consider a partial data sequence $\calX_{(i)}$ up to the time index $i$, i.e.\footnote{Thus, for example $\calX_{(2)}=[\mathcal{X}(1), \mathcal{X}(2)]^{\rm T}$ and $\calX_{(N)}=[\mathcal{X}(1), \cdots, \mathcal{X}(N)]^{\rm T} \stackrel{\Delta}{=} \calX$.}
\[
\calX_{(i)} = [\mathcal{X}(1) \;\; \mathcal{X}(2) \;\; \cdots \;\;
\mathcal{X}(i)]^{\rm T}
\]
and define $M_{\mathcal{X}_{(i)}}$ as the corresponding cost function, i.e.
\begin{equation}\label{Mx}
M_{\mathcal{X}_{(i)}} = \min_{{h}} \|\calY_{(i)} - \sqrt{\rho}\;{\rm
diag}(\calX_{(i)}) \BA_{(i)}^{\rm H}\h\|^{2}, \\
\end{equation}
where $\BA^{\rm H}_{(i)}$ consists of the first $i$ rows of $\BA^{\rm H}.$

In the following, we pursue an idea for blind equalization of single-input multiple-output systems first inspired by \cite{ICASSP:Weiyu2008}. Let $R$ be the optimal value for the objective function (\ref{obj}) (we show how to determine $R$ in Section \ref{Subsec:ChoiceOfR} further ahead). If $M_{\mathcal{X}_{(i)}} > R,$ then $\calX_{(i)}$ can not be the first $i$ symbols of the ML solution $\hat{\calX}^{\rm ML}$ to (\ref{obj}). To prove this, let $\hat{\calX}^{\rm ML}$ and $\hat{\h}^{\rm ML}$ denote the ML estimates and suppose that our estimate $\hat{\calX}_{(i)}$ satisfies
\begin{equation}
\label{XhateqXhatML}
\hat{\calX}_{(i)}=\hat{\calX}_{(i)}^{\rm ML}
\end{equation}
i.e. the estimate $\hat{\calX}_{(i)}$ matches the first $i$ elements of the ML estimate. Then, we can write
\begin{eqnarray}
\nonumber R &=& \min_{h, \mathcal{X} \in \Omega^N} \|\calY - \sqrt{\rho}\;{\rm
diag}(\calX) \BA^{\rm H}\h\|^{2} \\ \nonumber &=& \|\calY_{(i)} - \sqrt{\rho}\;{\rm diag} (\hat{\calX}_{(i)}^{\rm ML}) \BA_{(i)}^{\rm H} \hat{\h}^{\rm ML}\|^{2}\\ \nonumber && + \sum_{j=i+1}^{N}|\mathcal{Y}(j) - \sqrt{\rho}\;\hat{\mathcal{X}}^{\rm ML}(j) \Ba_j^{\rm H}\hat{\h}^{\rm ML}|^{2} \\ \nonumber &=& \|\calY_{(i)} - \sqrt{\rho}\;{\rm diag}(\hat{\calX}_{(i)}) \BA_{(i)}^{\rm H}\hat{\h}^{\rm ML}\|^{2} \\  && + \sum_{j=i+1}^{N}|\mathcal{Y}(j) - \sqrt{\rho}\;\hat{\mathcal{X}}^{\rm ML}(j)\Ba_j^{\rm H}\hat{\h}^{\rm ML}|^{2},
\end{eqnarray}
where the last equation follows from (\ref{XhateqXhatML}). Now, clearly
\begin{eqnarray}
\label{derivaR}
\nonumber \|\calY_{(i)} \hspace{-0.8em}&-& \hspace{-0.8em}\sqrt{\rho}\;{\rm diag}(\hat{\calX}_{(i)}) \BA_{(i)}^{\rm H}\hat{\h}^{\rm ML}\|^{2} \\ &\geq& \min_{{h}} \|\calY_{(i)} - \sqrt{\rho}\;{\rm diag}(\hat{\calX}_{(i)}) \BA_{(i)}^{\rm H}{\h}\|^{2} \\ &=& \|\calY_{(i)} - \sqrt{\rho}\;{\rm diag}(\hat{\calX}_{(i)}) \BA_{(i)}^{\rm H}{\hat{\h}}\|^{2},
\end{eqnarray}
where $\hat{\h}$ is the argument that minimizes the RHS of (\ref{derivaR}). Then
\begin{eqnarray}
\nonumber R &=&  \|\calY_{(i)} - \sqrt{\rho}\;{\rm diag}(\hat{\calX}_{(i)}) \BA_{(i)}^{\rm H}\hat{\h}^{\rm ML}\|^{2}\\ \nonumber && + \sum_{j=i+1}^{N}|\mathcal{Y}(j) - \sqrt{\rho}\;\hat{\mathcal{X}}(j)\Ba_j^{\rm H}\hat{\h}^{\rm ML}|^{2} \\ \nonumber &\ge&  \min_{{h}} \|\calY_{(i)} - \sqrt{\rho}\;{\rm diag}(\hat{\calX}_{(i)}) \BA_{(i)}^{\rm H}{\h}\|^{2} \\ &=& M_{\mathcal{X}_{(i)}}. \label{Mxhat}
\end{eqnarray}
So, for $\hat{\calX}_{(i)}$ to correspond to the first $i$ symbols of the ML solution $\hat{\calX}_{(i)}^{\rm ML}$, we should have $M_{\hat{\mathcal{X}}_{(i)}} < R$. Note that the above represents a necessary condition only. Thus if $\hat{\calX}_{(i)}$ is such that $M_{\hat{\mathcal{X}}_{(i)}} < R$, then this does not necessarily mean that $\hat{\calX}_{(i)}$ coincides with $\hat{\calX}_{(i)}^{\rm ML}$.

This suggests the following method for blind equalization. At each subcarrier frequency $i$, make a guess of the new value of $\mathcal{X}(i)$ and use that along with previous estimated values $\hat{\mathcal{X}}(1), ..., \hat{\mathcal{X}}(i-1)$ to construct $\hat{\calX}_{(i)}$. Estimate $\h$ so as to minimize $M_{\hat{\mathcal{X}}_{(i)}}$ in (\ref{Mxhat}) and calculate the resulting minimum value of $M_{\hat{\mathcal{X}}_{(i)}}$. If the value of $M_{\hat{\mathcal{X}}_{(i)}} < R$, then proceed to $i+1$. Otherwise, backtrack in some manner and change the guess of $\mathcal{X}(j)$ for some $j\leq i$. A problem with this approach is that for $i\leq L+1$, given any choice of $\hat{\mathcal{X}}(i)$, $\h$ can always be chosen by least-squares to make $M_{\hat{\mathcal{X}}_{(i)}}$ in (\ref{Mxhat}) equal to zero\footnote{Since $\BA^{\rm H}_{(i)}$ is full rank for $i \le L+1,$ ${\rm diag}(\calX_{(i)})\BA^{\rm H}_{(i)}$ is full rank too for each choice of ${\rm diag}(\calX_{(i)})$ and so we will always find some $\h$ that will make the objective function in (\ref{Mxhat}) zero (since $\h$ has $L+1$ degrees of freedom).}. Then, we will need at least $L+1$ pilots defying the blind nature of our algorithm. Alternatively, our search tree should be at least $L+1$ deep before we can obtain a nontrivial (i.e. nonzero) value for $M_{\hat{\mathcal{X}}_{(i)}}$.

An alternative strategy would be to find $\h$ using weighted regularized least squares. Specifically, instead of minimizing the objective function $J_{ML}$ in equation (\ref{obj}), we minimize the maximum a posteriori (MAP) objective function
\begin{eqnarray}\label{obj2}
J_{MAP} = \min_{h, \mathcal{X} \in \Omega^N} \left\{
\|\h\|^{2}_{R_h^{-1}} +\|\calY - \sqrt{\rho}\;{\rm
diag}(\calX) \BA^{\rm H}\h\|^{2}\right\}
\end{eqnarray}
where $\BR_h$ is the autocorrelation matrix of $\h$ (in Section \ref{method2}, we modify the blind algorithm to avoid the need for channel statistics). Now the objective function in (\ref{obj2}) can be decomposed as
\begin{eqnarray}\label{obj3}
\nonumber J_{MAP}\hspace{-0.8em} &=& \hspace{-0.8em}\min_{h, \mathcal{X} \in \Omega^N}   \left\{ \underbrace{
\|\h\|^{2}_{R_h^{-1}} 
+\sum_{j=1}^{i}|\mathcal{Y}(j) -
\sqrt{\rho}\;\mathcal{X}(j)\Ba_j^{\rm H}\h|^{2}}_{=M_{\mathcal{X}_{(i)}}} \right. \\  && + \left.
\sum_{j=1+1}^{N}|\mathcal{Y}(j) -\sqrt{\rho}\;
\mathcal{X}(j)\Ba_j^{\rm H}\h|^{2} \right\}
\end{eqnarray}
Given an estimate of $\hat{\calX}_{(i-1)}$, the cost function reads
\begin{eqnarray}\label{obj4}
\nonumber {M}_{\hat{\mathcal{X}}_{(i-1)}}\hspace{-1em} &=& \\
&&\hspace{-6.5em}  \min_{h}
\left\{\|\h\|^{2}_{R_h^{-1}}+\|\calY_{(i-1)} - \sqrt{\rho}\;
{\rm diag}(\hat{\calX}_{(i-1)}) \BA_{(i-1)}^{\rm H}\h\|^{2} \right\}
\end{eqnarray}
with the optimum value (see \cite{Book:Sayed2003}, Chapter $12$, pp. $671$)
\begin{eqnarray}
\nonumber \hat{\h} = \sqrt{\rho}\;\mbox{{\emph{\textbf{R}}}}_h\BA_{(i-1)}{\rm diag}(\hat{\calX}_{(i-1)}^{\rm H})\hspace{-1em} && \\ \nonumber &&\hspace{-14.5em} [\II+ \rho\;{\rm diag}(\hat{\calX}_{(i-1)}) \BA_{(i-1)}^{\rm H}\mbox{{\emph{\textbf{R}}}}_h \BA_{(i-1)}{\rm diag}(\hat{\calX}_{(i-1)}^{\rm H})]^{-1}\calY_{(i-1)}\\
\end{eqnarray}
and corresponding minimum cost (MMSE error)
\begin{equation}
\mbox{mmse}\hspace{-0.2em} =\hspace{-0.2em} [\mbox{{\emph{\textbf{R}}}}^{-1}_h + \rho\BA_{(i-1)}{\rm diag}(\hat{\calX}_{(i-1)})^{\rm H}{\rm diag}(\hat{\calX}_{(i-1)}) \BA_{(i-1)}^{\rm H}]^{-1}\label{eq:mmseOfMAPeq}
\end{equation}
If we have a guess of $\mathcal{X}(i)$, we can update the cost function and obtain $M_{\hat{\mathcal{X}}_{(i)}}$. In fact, the cost function $M_{\hat{\mathcal{X}}_{(i)}}$ is the same as that of $M_{\hat{\mathcal{X}}_{(i-1)}}$ with the additional observation $\mathcal Y(i)$ and an additional regressor $\hat{\mathcal{X}}(i)\Ba_i^{\rm H}$, i.e.
\begin{eqnarray}\label{obj4a}
\nonumber && M_{\hat{\mathcal{X}}_{(i)}} = \min_{h}
\left\{\|\h\|^{2}_{R_h^{-1}}+\right. \\ && \hspace{-2em}\left. \left\|\left[%
\begin{array}{c}
  \calY_{(i-1)} \\
  \mathcal{Y}(i) \\
\end{array}%
\right]
 -
\sqrt{\rho}\; \left[%
\begin{array}{c}
  {\rm diag}(\hat{\calX}_{(i-1)}) \BA_{(i-1)}^{\rm H} \\
  \hat{\mathcal{X}}(i)\Ba^{\rm H}_i \\
\end{array}%
\right]\h\right\|^{2} \right\}
\end{eqnarray}
We can thus, recursively update the value $M_{\hat{\mathcal{X}}_{(i)}}$ based on $M_{\hat{\mathcal{X}}_{(i-1)}}$ using recursive least squares (RLS) \cite{Book:Sayed2003}, i.e.
\begin{eqnarray}\label{cost}
{M}_{\hat{\mathcal{X}}_{(i)}} = {M}_{\hat{\mathcal{X}}_{(i-1)}}+
\gamma(i)|\mathcal{Y}(i)- \sqrt{\rho}\;\hat{\mathcal{X}}(i)\Ba^{\rm H}_i\hat{\h}_{i-1}|^{2}\label{J2}
\end{eqnarray}
\begin{eqnarray}\label{costh}
\hat{\h}_i = \hat{\h}_{i-1} +
\Bg_i\left(\mathcal{Y}(i)- \sqrt{\rho}\;\hat{\mathcal{X}}(i)\Ba^{\rm H}_i\hat{\h}_{i-1}\right)
\end{eqnarray}
where
\begin{eqnarray}
\Bg_i & = & \sqrt{\rho}\;\gamma(i)\hat{\mathcal{X}}(i)^{\rm H}\BP_{i-1}\Ba_i\label{g}\\
\gamma(i) & = &
\frac{1}{1+\rho|\hat{\mathcal{X}}(i)|^{2}\Ba^{\rm H}_i\BP_{i-1}\Ba_i}\label{gamma}\\
\BP_i & = & \BP_{i-1} -
\rho\;\gamma(i)|\hat{\mathcal{X}}(i)|^{2}\BP_{i-1}\Ba_i\Ba_i^{\rm H}\BP_{i-1}\label{Pi}
\end{eqnarray}
These recursions apply for all $i$ and are initialized by
\[
{M}_{\hat{\mathcal{X}}_{(-1)}} = 0, \;\;\; \BP_{-1} = \BR_h, \;\;\;
\mbox{and} \;\;\; \hat{\h}_{-1} = \vzero
\]
Now, let $R$ be the optimal value for the regularized objective function in (\ref{obj2}). If the value $R$ can be estimated, we can restrict the search of the blind MAP solution $\hat{\calX}$ to the offsprings of those partial sequences $\hat{\calX}_{(i)}$ that satisfy $M_{\hat{\mathcal{X}}_{(i)}} < R.$ This forms the basis for our exact blind algorithm described below.

%
%

\subsection{Exact Blind Algorithm} \label{blindalgo2}

In this subsection, we describe the algorithm used to find the MAP solution of the system. The algorithm employs the above set of iterations (\ref{cost})$-$(\ref{Pi}) to update the value of the cost function $M_{\hat{\mathcal{X}}_{(i)}}$ which is then compared with the optimal value $R$. The input parameters for the algorithm are: the received channel output $\calY$, the initial search radius $r$, the modulation constellation\footnote{Examples of the modulation constellation are $\Omega$ are $4$-QAM and $16$-QAM. We use $|\Omega|$ to denote the constellation size and $\Omega(k)$ for the $k$th constellation point. For example, in $4$-QAM $|\Omega| = 4$ and $\Omega(1),\cdots, \Omega(4)$ are the four constellation points of $4$-QAM. The indicator $I(i)$ refers to the last constellation point visited by our search algorithm at the $i$th subcarrier.} $\Omega$ and the $1\times$$N$ index vector $I$.

The algorithm is described as follows (the algorithm is also described by the flowchart in Figure \ref{BlinAFC})




\begin{enumerate}
  \item (\textbf{Initialize}) Set $i=1$, $I(i)=1$ and set $\hat{\mathcal{X}}(i)=\Omega(I(i)).$
  \item (\textbf{Compare with bound}) Compute and store the metric $M_{\hat{\mathcal{X}}_{(i)}}.$ If $M_{\hat{\mathcal{X}}_{(i)}}>r,$ go to 3; else, go to 4;
  \item (\textbf{Backtrack})\; Find the largest\;\; $1\leq$$j\leq$$i$ \;\; such that\\ $I(j)<|\Omega|$.  If there exists such $j,$ set $i=j$ and go to 5; else go to 6.
  \item (\textbf{Increment subcarrier}) If $i<N$ set $i=i+1, I(i)=1$, $\hat{\mathcal{X}}(i)=\Omega(I(i))$ and go to 2; else store current $\hat{\calX}_{(N)},$ update $r=M_{\hat{\mathcal{X}}_{(N)}}$ and go to 3.
  \item (\textbf{Increment constellation}) Set $I(i)=I(i)+1$ and $\hat{\mathcal{X}}(i)=\Omega(I(i)).$ Go to 2.
  \item (\textbf{End/Restart}) If a full-length sequence $\hat{\calX}_{(N)}$ has been found in Step 4, output it as the MAP solution and terminate; otherwise, double $r$ and go to 1.
\end{enumerate}

The essence of the algorithm is to eliminate any choice of the input that increments the objective function beyond the radius $r$. When such a case is confronted, the algorithm backtracks (Step $3$ then Step $5$) to the nearest subcarrier whose alphabet has not been exhausted (the nearest subcarrier will be the current subcarrier if its alphabet set is not exhausted).

The other dimension the algorithm works on is properly sizing $r$; if $r$ is too small such that we are not able to backtrack, the algorithm doubles $r$ (Step $3$ then Step $6$). If on the other hand $r$ is too large that we reach the last subcarrier too fast, the algorithm reduces $r$ to the most recent value of the objective function. ($r = M_{\mathcal{X}_{(N)}}$) and backtracks (Step $4$ then Step $3$).
\begin{figure*}[htb]
\begin{center}
\epsfxsize = 7 true in \epsfbox{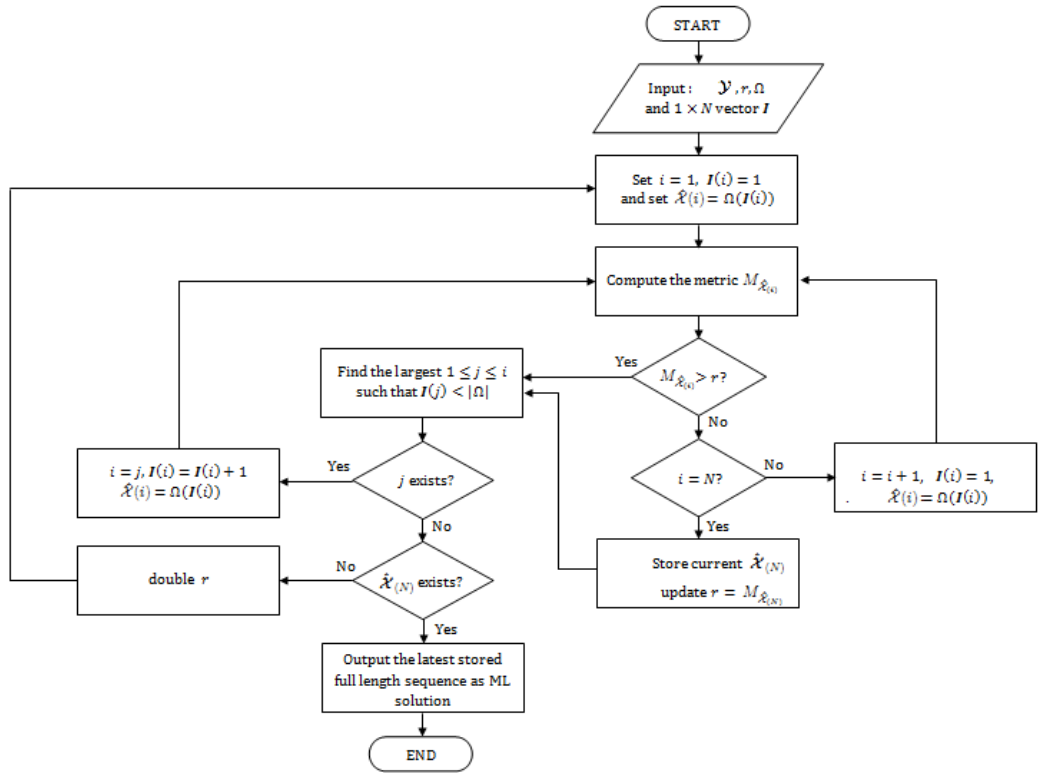}
\caption{\small Flowchart of the blind algorithm.} \label{BlinAFC}
\end{center} \end{figure*}

\vspace{0.5em}
\noindent {\emph{Remark 1:} 
The backtracking algorithm depends heavily on calculating the cost function using (\ref{cost})-(\ref{Pi}). In the constant modulus case, the values of $\rho|\hat{\mathcal{X}}(i)|^{2}$ in equations (\ref{gamma}) and (\ref{Pi}) become constant (equal to $\rho\;\mathcal{E}_\mathcal{X}$) for all $i$, and the values of $\gamma(i)$ and $\BP_i$ become
\begin{eqnarray}
\gamma(i) & = &
\frac{1}{1+\rho\;\mathcal{E}_\mathcal{X}\Ba^{\rm H}_i\BP_{i-1}\Ba_i}\\
\BP_i & = & \BP_{i-1} -
\rho\;\mathcal{E}_\mathcal{X}\gamma(i)\BP_{i-1}\Ba_i\Ba_i^{\rm H}\BP_{i-1},
\end{eqnarray}
which are independent of the transmitted signal and thus can be calculated offline.

\vspace{0.5em}
\noindent {\emph{Remark 2:} The algorithm can also be used for a pilot-based standard. In this case, when the algorithm reaches a pilot holding-subcarrier, no backtracking is performed as the value of the data carrier is known perfectly. In the presence of pilots, it is wise to execute the algorithms over the pilot-holding subcarriers first and subsequently move to the data subcarriers. For equispaced comb-type pilots, (semi)-orthogonality of regressors is still guaranteed.

\vspace{0.5em}
\noindent {\emph{Remark 3:} Like all blind algorithms, we use one pilot bit to resolve the sign ambiguity (see references in Table \ref{tableLitClass}).
\subsection{Determination of initial radius $\rho$, $\BR_h$ and $r$} 
\label{Subsec:ChoiceOfR}
Our algorithm depends on $\rho$, $\BR_h$ and $r$ which we need to determine. The receiver can easily estimate $\rho$ by measuring the additive noise variance at its side. As for the channel covariance matrix $\BR_h$, our simulations show that with carrier reordering we can replace $\BR_h$ with identity with essentially no effect on performance. This becomes especially true in the high SNR regime. It remains to obtain an initial guess of the search radius $r$. To this emd, note that if $\h$ and $\calX$ are perfectly known (with $\h$ drawn from $\mathcal{N}({\bf{0}},\BR_h)$ but is known) then
\begin{equation}\label{obj4}
\xi = \|\h\|^{2}_{R_h^{-1}}+ \|\calY - \sqrt{\rho}\;{\rm
diag}(\calX) \BA^{\rm H}\h\|^{2}
\end{equation}
is a chi-square random variable with $k=2(N+L+1)$ degrees of freedom\footnote{The first term on the right hand side has $2(L+1)$ degrees of freedom as $\h$ is Gaussian distributed while the second term has $2N$ degrees of freedom as $\calY - \sqrt{\rho}\;{\rm diag}(\calX) \BA^{\rm H}\h$ is just Gaussian noise.}. Thus, the search radius should be chosen such that $P(\xi > r)\leq{\epsilon}$, where $P(\xi > r) = 1 - F(r;\,k)$, and where $F(r;\,k)$ is the cumulative distribution function of the chi-square random variable given by
\begin{equation}
F(r;\,k) = \frac{\gamma(k/2,\,r/2)}{\Gamma(k/2)},
\end{equation}
Here, $\gamma(k/2,\,r/2)$ is the lower incomplete gamma function defined as
\begin{equation}
\gamma(k/2,\,r/2) = \int_0^{r/2} t^{k/2-1}\,e^{-t}\,{\rm d}t.
\end{equation}
So, under this initial radius, we guarantee finding the MAP solution with probability at least $1-\epsilon$. In case a solution is not found, the algorithm doubles the value of $r$ and starts over. This process continues until a solution is found. For example, when $N=64, L=15$ and $\epsilon = 0.01$, the value of our radius should be set to $204$. 


\section{An Approximate Blind Equalization Method} \label{method2}

There are two main sources that contribute to the complexity of the exact blind algorithm of Section \ref{Desc2}:
\begin{enumerate}
    \item \emph{{Calculating $\BP_i$:}} the second step of the blind algorithm requires updating the metric $M_{\hat{\mathcal{X}}_{(N)}}$. This metric depends heavily on operations involving the $(L+1) \times (L+1)$ matrix $\BP_i$ which are the most computationally expansive (see Table \ref{tableRLS} which estimates the computational complexity of the RLS).
    \item \emph{{Backtracking:}} When the condition $M_{\hat{\mathcal{X}}_{(i)}}\leq r$ is not satisfied, we need to backtrack and pursue another branch of the search tree. This represents a major source of complexity.
\end{enumerate}
In the following, we show how we can avoid calculating $\BP_i$ all together. We postpone the issue of backtracking to Section \ref{complexity2}.

\subsection{Avoiding $\BP_i$}
Note that in the RLS recursions (\ref{J2})$-$(\ref{Pi}), $\BP_i$ always appears multiplied by $\Ba_i$. Let's see how this changes if we set $\BP_{-1}=\BI$ and assume that the $\Ba_i$'s are orthogonal or, in particular, if we assume that $\Ba^{\rm H}_i\Ba_{i+1}=\Ba^{\rm H}_i\Ba_{i+2}=0.$ With these assumptions note that
\begin{equation}
\gamma(0) =
\frac{1}{1+\rho\;|\hat{\mathcal{X}}(0)|^{2}\Ba^{\rm H}_0 \BP_{-1}\Ba_0}= \frac{1}{1+\rho\; |\hat{\mathcal{X}}(0)|^{2}(L+1)}
\end{equation}
i.e., $\gamma(0)$ is independent of $\BP_{-1}$. Also note that
\begin{eqnarray}
\nonumber \BP_0\Ba_1 &=& \BP_{-1}\Ba_1 -\rho\;
\gamma(0)|\hat{\mathcal{X}}(0)|^{2}\BP_{-1}\Ba_0\Ba_0^{\rm H}\BP_{-1}\Ba_1\\ \nonumber &=&
 \Ba_1 -\rho \;
\gamma(0)|\hat{\mathcal{X}}(0)|^{2}\Ba_0\Ba_0^{\rm H}\Ba_1\\
&=&  \Ba_1. \label{a1}
\end{eqnarray}
For a similar reason
\begin{equation}\label{a2}
\BP_0\Ba_2 = \Ba_2.
\end{equation}
From (\ref{a1}), it is also easy to conclude that
\begin{equation}
\gamma(1) =
\frac{1}{1+\rho\;|\hat{\mathcal{X}}(1)|^{2}(L+1)}
\end{equation}
i.e., $\gamma(1)$ is independent of $\BP_0$. Also, from (\ref{a1}) and (\ref{a2}) it follows that $\BP_i \Ba_{i+1}=\Ba_{i+1}$ and $\BP_i \Ba_{i+2}=\Ba_{i+2}$. We now investigate what happens to $\BP_{i+1}$.
\begin{eqnarray}
\nonumber  && \hspace{-2em} \BP_{i+1}\Ba_{i+2}  \\ \nonumber  &=& \BP_{i}\Ba_{i+2} -\rho\;
\gamma({i+1})|\hat{\mathcal{X}}({i+1})|^{2}\BP_{i}\Ba_{i+1}\Ba_{i+1}^{\rm H}\BP_{i}\Ba_{i+2}\\ \nonumber
& =&  \Ba_{i+2} -\rho \;
\gamma({i+1})|\hat{\mathcal{X}}({i+1})|^{2}\Ba_{i+1}\Ba_{i+1}^{\rm H}\Ba_{i+2}\\
& =&  \Ba_{i+2}.
\end{eqnarray}
Similarly,
\begin{equation}
\BP_{i+1}\Ba_{i+3} = \Ba_{i+3}.
\end{equation}
So, by induction we see that each occurrence of $\BP_i \Ba_i$ in the recursion set (\ref{cost})-(\ref{gamma}) can be replaced with $\Ba_i$. This allows us to discard (\ref{Pi}), i.e.,
\begin{eqnarray}\label{cost3}
{M}_{\hat{\mathcal{X}}_{(i)}} &=& {M}_{\hat{\mathcal{X}}_{(i-1)}}+
\gamma(i)|\mathcal{Y}(i)- \sqrt{\rho}\;\hat{\mathcal{X}}(i)\Ba^{\rm H}_i\hat{\h}_{i-1}|^{2}\label{J23} \\
\hat{\h}_i &=& \hat{\h}_{i-1} + \Bg_i\left(\mathcal{Y}(i)
- \sqrt{\rho}\;\hat{\mathcal{X}}(i)\Ba^{\rm H}_i\hat{\h}_{i-1}\right),
\end{eqnarray}
where
\begin{eqnarray}
\Bg_i & = & \sqrt{\rho}\;\gamma(i)\hat{\mathcal{X}}(i)^{\rm H}\Ba_i\label{g3}\\
\gamma(i) & = &
\frac{1}{1+\rho\;|\hat{\mathcal{X}}(i)|^{2}(L+1)}.\label{gamma3}
\end{eqnarray}
Thus, the approximate blind RLS algorithm is effectively running at LMS complexity. Table~\ref{tableRLS} summarizes the computational complexity incurred in the RLS calculation.

\begin{table}
  \centering
  \caption{Estimated computational cost per iteration of the RLS algorithm }\label{tableRLS}
  \begin{tabular}{|c|c|c|c|}
    \hline
    \textbf{Term} & $\times$ & $+$ & $\div$\\
    \hline
    \hline
    $\sqrt{\rho}\;\hat{\mathcal{X}}(i)\Ba^{\rm H}_i\hat{\h}_{i-1}$ & $2L+2$ & $L$ &  \\
    $|\mathcal{Y}(i)-\sqrt{\rho}\;\hat{\mathcal{X}}(i)\Ba^{\rm H}_i\hat{\h}_{i-1}|^{2}$ & 1 & 1 &  \\
    $\rho\;\gamma(i)$ & 1 &  & 1 \\
    $M_{\hat{\mathcal{X}}_{(i)}}$ & 1 & 1 &  \\
    $\hat{\h}_i$ & $L+2$ & $L+1$ & 1 \\
    $\BP_{i-1}\Ba_i$ & $L^2+2L+1$ & $L^2+L$ &  \\
    $\Bg_i$ & $L+3$ &  &  \\
    $\Ba^{\rm H}_i\BP_{i-1}\Ba_i$ & $L+1$ & $L$ &  \\
    $\gamma(i)$ & 3 & 1 & 1 \\
    $\Ba^{\rm H}_i\BP_{i-1}$ & $L^2+2L+1$ & $L^2+L$ &  \\
    $\BP_i$ & $L^2+2L+2$ & $L^2+2L+1$ &  \\
    \hline
    \hline
    \textbf{Total per iteration} & $3L^2+11L+17$ & $2L^2+5L+4$ & 3 \\
    \hline
  \end{tabular}
\end{table}

\subsection{Avoiding $\BP_i$ with Carrier Reordering}

The reduction in complexity above is based on two assumptions. The first assumption is to set $\BP_{-1}=\BI$ (instead of $\BR_h$) and the second is to assume that the consecutive $\Ba_i$'s are orthogonal. Note that the $\Ba_i$'s are columns of $\A$, i.e. they are partial FFT vectors. As such, strictly speaking, they are not orthogonal. Notice, however, that for $i\neq i'$,
\begin{equation}
\Ba_i^{\rm H}\Ba_{i'} = \sum^{L}_{k=0}e^{(j\frac{2\pi}{N}(i-i')k)},
\end{equation}
which after straightforward manipulation can be shown to be
\begin{eqnarray}
|\Ba_i^{\rm H}\Ba_{i'}| =\left\{
\begin{array}{cc}
  L+1, & (i=i') \\
  \frac{1}{L+1}\left|\frac{sin(\pi(i-i')\frac{L+1}{N})}{sin(\pi(i-i')\frac{1}{N})}\right|, & (i\neq i') \\
\end{array}\right.\label{eq:eq41}
\end{eqnarray}
This is a function of $(i-i')$ mod $N$. Thus, without loss of generality, we can set $i'=1$ and plot this autocorrelation with respect to $i$. The autocorrelation decays with $i$ as shown in Figure \ref{autocorr}.
\begin{figure}[htb]
\begin{center}
\epsfxsize = 3.5 true in \epsfbox{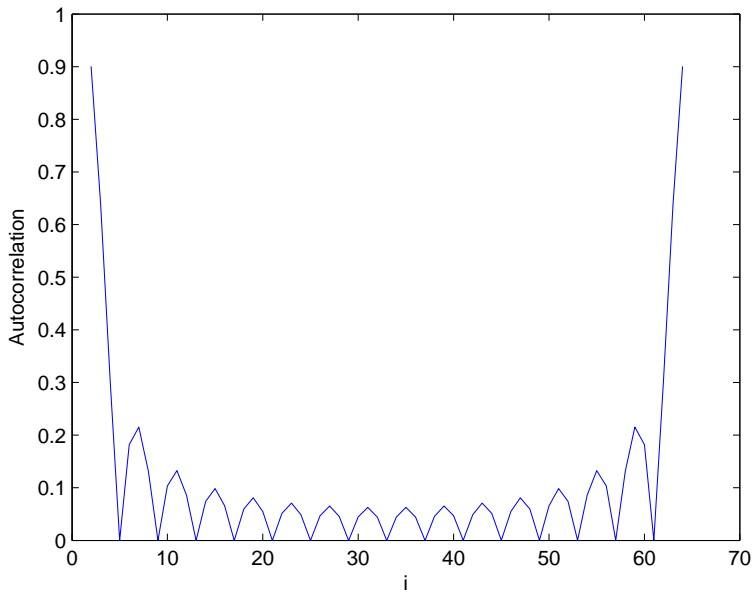} \caption{\small
Autocorrelation vs $i$ for $N=64$ and $L=15$} \label{autocorr}
\end{center} \end{figure}
We can use this observation in implementing our blind RLS algorithm. Specifically, note that the whole OFDM data is available to us and so we can visit the data subcarriers in any order we wish. The discussion above shows that the data subcarriers should be visited in the order $\;i, \;i+\Delta, \;i+2\Delta, \ldots$ where $\Delta$ should be chosen as large as possible to make $\;\Ba_i, \;\Ba_{i+\Delta}, \;\Ba_{i+2\Delta}, \ldots$ as orthogonal as possible, but small enough to avoid revisiting (or looping back to) a neighborhood too early. We found the choice $\Delta = \frac{N}{L+1}$ 
to be a good compromise. From Figure \ref{autocorr}, which plots (\ref{eq:eq41}) for $N=64$ and $L=15$, columns $1, 5, 9, 13, 17, 21,\cdots, 61$ are orthogonal to each other and so are the columns $2, 6, 10, 14, 18, \cdots, 62$. So, if the vectors are visited in the following order $1, 5, 9, 13, 17, 21, \cdots, 61, 2, 6, 10, 14, 18, \cdots, 62,\cdots$, then we have a consecutive set of vectors that are orthogonal. The only exception is in going from column $61$ to $2$. These two columns are not really orthogonal but are nearly orthogonal (the correlation of columns $1$ and $61$ is zero, so the correlation of 61 with 2 should be very small since the correlation function is continuous as shown in Figure \ref{autocorr}). In general, we chose $\Delta = \frac{N}{L+1}$ and visit the columns in the order ${i+\Delta, i+2\Delta, \cdots, i+L\Delta, i=1, \cdots, \Delta -1}$.

Our simulation results show that the BER we get with exact calculation of $\BP_i$ and that obtained when we set $\BP_{-1}=\BI$ with subcarrier reordering are almost the same. Table~\ref{tableRLSwithout} gives the computational complexity incurred in the RLS calculation when subcarrier reordering is used (i.e., free from $\BP_i$ calculation).
\begin{table}
  \centering
  \caption{Estimated computational cost per iteration of the RLS algorithm with Carrier Reordering}\label{tableRLSwithout}
  \begin{tabular}{|c|c|c|c|}
    \hline
    \textbf{Term} & $\times$ & $+$ & $\div$\\
    \hline
    \hline
    $\sqrt{\rho}\;\hat{\mathcal{X}}(i)\Ba^{\rm H}_i\hat{\h}_{i-1}$ & $2L+2$ & $L$ &  \\
    $|\mathcal{Y}(i)-\sqrt{\rho}\;\hat{\mathcal{X}}(i)\Ba^{\rm H}_i\hat{\h}_{i-1}|^{2}$ & 1 & 1 &  \\
    $\rho\;\gamma(i)$ & 1 &  & 1 \\
    $M_{\hat{\mathcal{X}}_{(i)}}$ & 1 & 1 &  \\
    $\hat{\h}_i$ & $L+2$ & $L+1$ & 1 \\
    $\gamma(i)$ & 3 & 1 & 1 \\
    \hline
    \hline
    \textbf{Total per iteration} & $4L+13$ & $2L+4$ & 3 \\
    \hline
  \end{tabular}
\end{table}

Note that with subcarrier reordering, the new version of the RLS runs without the need to use the power delay profile statistics, which relieves us from the need to provide this information.

\section{Computational Complexity in the High SNR Regime}\label{complexity2}

In the section, we study the other source of complexity (backtracking) and show that there is almost no backtracking\footnote{The term "backtracking" refers to the case when the algorithm is currently at subcarrier $i$ and it has to change the estimate of the data symbol at some subcarrier $j<i$. On the other hand, sweeping the constellation points at subcarrier to find the first one that satisfies ${M}_{{\mathcal{X}}_{(i)}} \leq r$ is not considered backtracking.} in the high SNR regime. To this end, consider the behavior of the algorithm when processing the $i$th subcarrier. There are $|\Omega|$ different alphabet possibilities to choose from at this subcarrier and a similar number of possibilities at the preceding $i-1$ subcarriers, creating a total of $|\Omega|^i - 1$ incorrect sequences $\bar{\calX}_{(i)}$ and one correct sequence $\hat{\calX}_{(i)}$. The best case scenario is to have only one sequence that satisfies ${M}_{\bar{\mathcal{X}}_{(i)}} \leq r$ in which case there would be only one node to visit. The worst case is having to visit the remaining $|\Omega|^i - 1$ wrong nodes before reaching the true sequence (visiting of nodes will happen through backtracking); this latter case is equivalent to the exhaustive search scenario (i.e., all possible sequences satisfy ${M}_{\bar{\mathcal{X}}_{(i)}} \leq r$). Thus, if we let $C_i$ denote the expected number of nodes visited at the $i$th subcarrier, then from above we can write
\begin{equation}\label{Cii}
C_i \leq 1+(|\Omega|^i - 1)P_i
\end{equation}
where $P_i$ is the maximum probability that an erroneous sequence of symbols $\bar{\calX}_{(i)}\neq\hat{\calX}_{(i)}$ has a cost less than $r$. We will show that this probability becomes negligibly small at high SNR values. Recall that
\begin{eqnarray}\label{Yii}
\calY_{(i)} =  \sqrt{\rho}\;\mbox{diag}(\hat{\calX}_{(i)})\A_{(i)}^{\rm H} \h + \calN_{(i)}
\end{eqnarray}
where $\calN_{(i)}$ denotes the first $i$ symbols of $\calN$. Note the (\ref{Yii}) can be written as
\begin{equation}\label{Yii3}
\calY_{(i)} = \begin{bmatrix}
\sqrt{\rho}\;{\rm diag}(\hat{\calX}_{(i)}) \BA^{\rm H}_{(i)} &
\BI \\
\end{bmatrix}
\begin{bmatrix}
\h \\
\calN_{(i)} \\
\end{bmatrix}
\end{equation}
We first prove our claim for the least squares (LS) cost and then show how the MAP cost reduces to LS cost for high SNR.

\subsection{LS cost}

Suppose we have an erroneous sequence of symbols $\bar{\calX}_{(i)}\neq\hat{\calX}_{(i)}$. The LS estimate of $\h$ is found by minimizing the objective function
\begin{eqnarray}\label{obj2LS}
J_{LS} = \min_{h, \mathcal{X} \in \Omega^N} \left\{
\|\calY_{(i)} - \sqrt{\rho}\;{\rm
diag}(\calX_{(i)}) \BA_{(i)}^{\rm H}\h\|^{2}\right\}
\end{eqnarray}
and the solution of $\h$ is (see \cite{Book:Sayed2003}, Chapter $12$, pp. $664$)
\begin{equation}
\hat{\h} = [\BA_{(i)}{\rm diag}(\hat{\calX}_{(i)}^{\rm H}){\rm diag}(\hat{\calX}_{(i)})\BA_{(i)}^{\rm H}]^{-1}   \sqrt{\rho}\;\BA_{(i)}{\rm diag}(\hat{\calX}_{(i)}^{\rm H})\calY_{(i)}.
\end{equation}
The cost associated with the LS solution is given by (see \cite{Book:Sayed2003}, Chapter $11$, pp. $663$)
\begin{eqnarray}
\nonumber {M}_{\bar{\mathcal{X}}_{(i)}}\hspace{-1em} &=&\hspace{-0.75em} \calY_{(i)}^{\rm H}\Big(\mbox{{\emph{\textbf{I}}}} - \sqrt{\rho}\;{\rm diag}(\bar{\calX}_{(i)}) \BA_{(i)}^{\rm H} \left[\sqrt{\rho}\;\BA_{(i)}{\rm diag}(\bar{\calX}_{(i)})^{\rm H} \right. \\ \nonumber && \left.\sqrt{\rho}\;{\rm diag}(\bar{\calX}_{(i)}) \BA_{(i)}^{\rm H}\right]^{-1} \sqrt{\rho}\BA_{(i)}{\rm diag}(\bar{\calX}_{(i)}^{\rm H})\Big) \calY_{(i)} \\ \nonumber
&=& \hspace{-0.75em}\calY_{(i)}^{\rm H}\Big(\mbox{{\emph{\textbf{I}}}} - \rho\;{\rm diag}(\bar{\calX}_{(i)}) \BA_{(i)}^{\rm H} \left(\rho\BA_{(i)}|{\rm diag}(\bar{\calX}_{(i)})|^2\BA_{(i)}^{\rm H}\right)^{-1} \\ \nonumber && \BA_{(i)}{\rm diag}(\bar{\calX}_{(i)}^{\rm H})\Big) \calY_{(i)} \\ \nonumber
&=& \hspace{-0.75em}\calY_{(i)}^{\rm H}\Big(\mbox{{\emph{\textbf{I}}}} - \frac{\rho}{\rho}\BD\Big) \calY_{(i)} \\
{M}_{\bar{\mathcal{X}}_{(i)}} &=& \calY_{(i)}^{\rm H}\Big(\mbox{{\emph{\textbf{I}}}} - \BD\Big) \calY_{(i)}
\end{eqnarray}
where
\begin{equation}
\BD = {\rm diag}(\bar{\calX}_{(i)}) \BA_{(i)}^{\rm H} \left(\BA_{(i)}|{\rm diag}(\bar{\calX}_{(i)})|^2\BA_{(i)}^{\rm H}\right)^{-1} \hspace{-1em}\BA_{(i)}{\rm diag}(\bar{\calX}_{(i)}^{\rm H}).
\end{equation}
So the probability that the sequence $\bar{\calX}_{(i)}$ satisfies ${M}_{\bar{\mathcal{X}}_{(i)}} \leq r$ reads
\begin{eqnarray}
\nonumber P_i &=&\mbox{Pr} ({M}_{\bar{\mathcal{X}}_{(i)}} \leq r) \\
P_i &=& \mbox{Pr} \bigg( \calY_{(i)}^{\rm H}\Big(\mbox{{\emph{\textbf{I}}}} - \BD\Big) \calY_{(i)} \leq r\bigg)\label{P_i_inter1LS}
\end{eqnarray}
In the strict sense of the word, backtracking means visiting Step $3$ in our algorithm. Substituting (\ref{Yii3}) in (\ref{P_i_inter1LS}) yields
\begin{equation}\label{PPiiLS}
P_i= {\rm Pr}\left(\left(
\begin{bmatrix}
  \h \\
  \calN_{(i)} \\
\end{bmatrix}^{\rm H} \BG_{(i)}  \begin{bmatrix}
  \h \\
  \calN_{(i)} \\
\end{bmatrix} \right)\leq r\right)
\end{equation}
where
\begin{equation}
\BG_{(i)} \hspace{-0.3em}= \hspace{-0.3em}\begin{bmatrix}
  \sqrt{\rho}\;\BA_{(i)}{\rm
diag}(\hat{\calX}_{(i)}^{\rm H})  \\
  \BI \\
\end{bmatrix}\hspace{-0.3em} \left[\BI - \BD \right]\hspace{-0.3em} \begin{bmatrix}
   \sqrt{\rho}\;{\rm diag}(\hat{\calX}_{(i)}) \BA^{\rm H}_{(i)} &
  \hspace{-0.2em} \BI \\
\end{bmatrix}.\label{eq:matrixGLS}
\end{equation}
Let $\BB = {\rm diag}(\hat{\calX}_{(i)}) \BA^{\rm H}_{(i)}$, then $\BG_{(i)}$ can be written as
\begin{equation}
\BG_{(i)} = \begin{bmatrix}
  \rho\; \BB^{\rm H}\left[\BI - \BD \right] \BB  & \BB^{\rm H}\left[\BI - \BD \right]\BI\\
  \BI\left[\BI - \BD \right] \BB & \BI\left[\BI - \BD \right]\BI\\
\end{bmatrix}\label{eq:matrixGLS2}
\end{equation}
which in compact form can be expressed as
\begin{equation}
\BG_{(i)} = \begin{bmatrix}
  \rho \BE  & \BE_2\\
  \BE_2^{\rm H} & \BE_3
\end{bmatrix}.\label{eq:matrixGLS3}
\end{equation}
Using the Chernoff bound the right hand side of (\ref{PPiiLS}) can be bounded in the following way
\begin{eqnarray}\label{PiminLS}
P_i \leq  e^{\mu r} E \Bigg[\mbox{exp}\left({-\mu 
\begin{bmatrix}
  \h \\
  \calN_{(i)}
\end{bmatrix}^{\rm H} \BG_{(i)}  \begin{bmatrix}
  \h \\
  \calN_{(i)}
\end{bmatrix} }\right) \Bigg].
\end{eqnarray}
Noting that
\begin{equation}
 \begin{bmatrix}
   \h  \\\calN_{(i)}
 \end{bmatrix} \sim \mathcal{N}(0, \BSigma_{(i)})
\end{equation}
with
\begin{equation}
\BSigma_{(i)} =   \begin{bmatrix} \BR_h & \bf{0} \\ \bf{0} & \BI_i \end{bmatrix},
\end{equation}
we can solve the expression in (\ref{PiminLS}) as
\begin{eqnarray}
\nonumber P_i \hspace{-1em} &\leq& \hspace{-1em} \frac{\mathlarger{\int}\mbox{exp}{\left(-\mu
\begin{bmatrix}
  \h \\
  \calN_{(i)}
\end{bmatrix}^{\rm H} \BG_{(i)}  \begin{bmatrix}
  \h \\
  \calN_{(i)}
\end{bmatrix} \right)}}{e^{-\mu r} \pi
^{(L+i+1)}}\\ \nonumber
&& \mbox{exp}{\left(-
\begin{bmatrix}
  \h \\
  \calN_{(i)}
\end{bmatrix}^{\rm H}  \BSigma_{(i)} \begin{bmatrix}
  \h \\
  \calN_{(i)}
\end{bmatrix} \right)}d\h d\calN_{(i)}\\ \nonumber
&=&\hspace{-1em} \frac{\mathlarger{\int}\mbox{exp}{\left(-
\begin{bmatrix}
  \h \\
  \calN_{(i)}
\end{bmatrix}^{\rm H} (\BSigma_{(i)} + \mu \BG_{(i)})  \begin{bmatrix}
  \h \\
  \calN_{(i)}
\end{bmatrix} \right)} d\h d\calN_{(i)}}{e^{-\mu r} \pi
^{(L+i+1)}}\\
&=&\hspace{-1em} \frac{\mathlarger{\int}\mbox{exp}\left({-\left|\left|
  \begin{bmatrix}
  \h \\
  \calN_{(i)}
\end{bmatrix} \right|\right|^2_{(\BSigma_{(i)} + \mu \BG_{(i)})}} \right) d\h d\calN_{(i)}}{e^{-\mu r} \pi
^{(L+i+1)}}.\label{eq:gaussianIntegralLS}
\end{eqnarray}
Note that the numerator in (\ref{eq:gaussianIntegralLS}) is a multi-variate complex Gaussian integral. Recall that an $n$-dimensional complex Gaussian integral has the solution (see \cite{ICASSP:Weiyu2008})
\begin{equation}
  \mathlarger{\int} \mbox{exp}\left({-\left|\left| \bf{x} \right|\right|^2_{\BW}}\right) d{\bf{x}}= \frac{\pi^n}{{\rm det} (\BW)}.
\end{equation}
This allows us to simplify (\ref{eq:gaussianIntegralLS}) as
\begin{equation}
P_i \leq \frac{e^{\mu r}}{{\rm det} (\BSigma_{(i)} + \mu
\BG_{(i)})}. \label{eq:P_ISimplfiedLS}
\end{equation}
Next, we show that the probability $P_i \rightarrow 0$ as $\rho \rightarrow \infty$. To show this, we just need to show that the largest eigenvalue of the term in the denominator goes to infinity as $\rho \rightarrow \infty$.

\begin{lemma}
Let $\BE=\BA_{(i)} \mathrm{diag}(\hat{\calX}_{(i)}^{\rm H})[\BI-\BD]\mathrm{diag}(\hat{\calX}_{(i)}) \BA_{(i)}^{\rm H}$ be a $(L+1)\times (L+1)$ matrix, then for any sequence $\hat{\calX}_{(i)}$, $\BE$ has a positive maximum eigenvalue, $\lambda_{\rm max}$ and a corresponding unit-norm eigenvector $\bf{v}$ of size $(L+1)\times 1$.
\end{lemma}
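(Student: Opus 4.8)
The plan is to reduce the whole statement to the fact that $\BI-\BD$ is an orthogonal projector. First I would set $\BF={\rm diag}(\bar{\calX}_{(i)})\BA_{(i)}^{\rm H}$, so that $\BD=\BF(\BF^{\rm H}\BF)^{-1}\BF^{\rm H}$. In the relevant regime $i>L+1$ the partial--FFT block $\BA_{(i)}^{\rm H}$ has full column rank $L+1$ (any $L+1$ of its rows form a Vandermonde matrix in the distinct nodes $e^{-j2\pi(l-1)/N}$), and since the constellation points are nonzero ${\rm diag}(\bar{\calX}_{(i)})$ is invertible; hence $\BF$ has full column rank and $\BF^{\rm H}\BF$ is invertible, so $\BD$ is well defined. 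A one--line check then gives $\BD^{2}=\BD=\BD^{\rm H}$, so $\BD$ is the orthogonal projector onto $\mathrm{range}(\BF)$ and $\BI-\BD$ is the complementary projector, in particular Hermitian positive semidefinite. (For $i\le L+1$ the inner inverse degenerates and $\BD=\BI$, consistent with the cost being identically zero there, so the claim is genuinely about $i>L+1$.)

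With $\BB={\rm diag}(\hat{\calX}_{(i)})\BA_{(i)}^{\rm H}$ we have $\BE=\BB^{\rm H}(\BI-\BD)\BB$, and using $(\BI-\BD)=(\BI-\BD)^{\rm H}=(\BI-\BD)^{2}$,
\[
\mathbf{x}^{\rm H}\BE\,\mathbf{x}=\big\|(\BI-\BD)\BB\,\mathbf{x}\big\|^{2}\ge 0
\]
for every $\mathbf{x}$. Thus $\BE$ is Hermitian positive semidefinite, so by the spectral theorem its eigenvalues are real and nonnegative and it admits an orthonormal eigenbasis; the largest eigenvalue $\lambda_{\rm max}\ge 0$ is attained at some unit--norm eigenvector $\mathbf{v}$. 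This already delivers the existence of $\lambda_{\rm max}$ and $\mathbf{v}$; the only remaining point is strict positivity.

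The hard part will be $\lambda_{\rm max}>0$, i.e. $\BE\neq\mathbf{0}$, equivalently $(\BI-\BD)\BB\neq\mathbf{0}$, equivalently $\mathrm{range}(\BB)\not\subseteq\mathrm{range}(\BF)$; since both ranges have dimension $L+1$ this is the same as saying that $V=\mathrm{range}(\BA_{(i)}^{\rm H})$ is \emph{not} left invariant by the diagonal map ${\rm diag}(\bar{\calX}_{(i)}){\rm diag}(\hat{\calX}_{(i)})^{-1}$. Here $V$ is the space of samples $(P(1),P(\omega),\dots,P(\omega^{i-1}))$ of polynomials $P$ of degree $\le L$, with $\omega=e^{-j2\pi/N}$. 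A convenient sufficient test is the first column: the leading column of $\BA_{(i)}^{\rm H}$ is $\1$, so $\BB\mathbf{e}_1=\hat{\calX}_{(i)}$, and it is enough to show $\hat{\calX}_{(i)}\notin\mathrm{range}(\BF)$, i.e. that the vector with entries $\hat{\calX}_{(i)}(l)/\bar{\calX}_{(i)}(l)$ is not a degree--$\le L$ polynomial sample. The one genuinely degenerate case is a constant ratio, $\bar{\calX}_{(i)}=c\,\hat{\calX}_{(i)}$, which makes $V$ trivially invariant and $\BE=\mathbf{0}$; this is exactly the global phase/scale ambiguity, and it is removed by the single pilot of Remark 3 (agreement on the pilot forces $c=1$, hence $\bar{\calX}_{(i)}=\hat{\calX}_{(i)}$, contradicting $\bar{\calX}_{(i)}\neq\hat{\calX}_{(i)}$).

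Ruling out the remaining non--constant coincidences $\mathrm{range}(\BB)=\mathrm{range}(\BF)$ is the step I expect to be the crux: one exploits that for $i>L+1$ a degree--$\le L$ interpolant is overdetermined on the $i$ distinct nodes and that the columns of $\BA_{(i)}^{\rm H}$ are rigid Vandermonde vectors, so a diagonal mask built from a non--constant symbol ratio cannot preserve $V$. Once $(\BI-\BD)\BB\neq\mathbf{0}$ is secured, the finish is immediate: $\mathrm{tr}(\BE)=\big\|(\BI-\BD)\BB\big\|_{F}^{2}>0$, whence $\lambda_{\rm max}\ge \mathrm{tr}(\BE)/(L+1)>0$, completing the proof.
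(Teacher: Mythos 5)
Your first two paragraphs are, nearly line for line, the paper's \emph{entire} proof: set $\BF={\rm diag}(\bar{\calX}_{(i)})\BA_{(i)}^{\rm H}$, recognize $\BD=\BF(\BF^{\rm H}\BF)^{-1}\BF^{\rm H}=\BF\BF^{\dag}$ as an idempotent (orthogonal projector), conclude $\BI-\BD$ is Hermitian positive semi-definite, write $\BE=\BB^{\rm H}(\BI-\BD)\BB$ to get that $\BE$ is Hermitian PSD, and invoke the spectral theorem / Rayleigh-quotient argument to exhibit $\lambda_{\rm max}$ attained at a unit-norm eigenvector. The paper stops exactly there: it never rules out $\BE=\mathbf{0}$, i.e.\ it proves ``non-negative maximum eigenvalue,'' not the ``positive maximum eigenvalue'' that the statement asserts and that Lemma 2 actually consumes (the whole complexity claim rests on $\rho\lambda_{\rm max}\rightarrow\infty$, which is vacuous if $\lambda_{\rm max}=0$).

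Judged against the statement, then, your proposal has a genuine gap, and you flag it yourself: the ``crux'' step --- showing ${\rm range}(\BB)\neq{\rm range}(\BF)$ when the symbol ratio is non-constant --- is only sketched, not proved, so strict positivity is not established. But this is precisely the part the paper silently skips, and your surrounding observations are correct and sharper than the published argument: (i) for $i\le L+1$ one has $\BD=\BI$ and hence $\BE=\mathbf{0}$, so the lemma has content only for $i>L+1$; and (ii) if $\bar{\calX}_{(i)}=c\,\hat{\calX}_{(i)}$ for a scalar $c$ (which can happen with valid constellation sequences, e.g.\ $c=j$ for $4$-QAM), then ${\rm range}(\BF)={\rm range}(\BB)$ and $\BE=\mathbf{0}$, so the lemma as stated is in fact false unless such error sequences are excluded --- which is exactly what the single pilot of Remark 3 buys. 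Your finishing device ($\lambda_{\rm max}\ge{\rm tr}(\BE)/(L+1)$ with ${\rm tr}(\BE)=\|(\BI-\BD)\BB\|_F^2$) is fine. What remains to complete the proof is the invariance argument that a non-constant diagonal mask cannot map the Vandermonde space ${\rm range}(\BA_{(i)}^{\rm H})$ into itself; a degree count does this cleanly for $i\ge 2L+1$ (if $c_l=R_0(\omega^{l-1})$ with $\deg R_0\le L$, then $x^kR_0(x)$ agreeing with a degree-$\le L$ polynomial on $i$ distinct nodes for $k=0,\dots,L$ forces $\deg R_0=0$), but the regime $L+1<i<2L+1$ needs additional care. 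In short: you match the paper up to where the paper stops, and your unfinished last step is an honest attempt at a hole the paper never acknowledges.
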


\begin{proof}
Recall that
\begin{eqnarray}
\nonumber \BD &=& {\rm diag}(\bar{\calX}_{(i)}) \BA_{(i)}^{\rm H} \left(\BA_{(i)}{\rm diag}(\bar{\calX}_{(i)}^{\rm H}){\rm diag}(\bar{\calX}_{(i)})\BA_{(i)}^{\rm H}\right)^{-1} \\ && \BA_{(i)}{\rm diag}(\bar{\calX}_{(i)}^{\rm H})
\end{eqnarray}
and let $\BF = {\rm diag}(\bar{\calX}_{(i)})\BA_{(i)}^{\rm H}$, then we can write the above equation as
\begin{equation}
 \BD = \BF\left(\BF^{\rm H} \BF\right)^{-1}  \BF^{\rm H} = \BF \BF^{\dag}
\end{equation}
where $\BF^{\dag}=\left(\BF^{\rm H} \BF\right)^{-1}  \BF^{\rm H}$ is the Moore-Penrose pseudo-inverse\footnote{the columns of $\BF$ are linearly independent.} (see \cite{Book:Meyer}, Chapter $5$, pp. $422$). Therefore, $\BD$ is an idempotent matrix with eigenvalues equal to either $0$ or $1$ \cite{Book:Horn} and hence, $[\BI - \BD]$ is also a positive semi-definite idempotent matrix. Note also that the matrix $\BE$ in (\ref{eq:matrixGLS3}) can be written as
\begin{eqnarray}
\nonumber \BE &=&\BA_{(i)} \mathrm{diag}(\hat{\calX}_{(i)}^{\rm H})[\BI-\BD]\mathrm{diag}(\hat{\calX}_{(i)}) \BA_{(i)}^{\rm H} \\
&=& \BB^{\rm H} [\BI-\BD] \BB
\end{eqnarray}

and
\begin{eqnarray}
\bf{z}^{\rm H} \BE \bf{z} = \bf{z}^{\rm H} \BB^{\rm H} [\BI-\BD] \BB \bf{z} =  (\BB \bf{z})^{\rm H} [\BI-\BD] (\BB \bf{z}) \geq 0
\end{eqnarray}
and so $\BE$ is Hermitian and positive semi-definite.

Let $\BU= [{\bf{u}}_1 \;\;{\bf{u}}_2\;\;\cdots\;\;{\bf{u}}_{L+1}]$ be a $(L+1) \times (L+1)$ unitary matrix where ${\bf{u}}_i$ is the $i$th eigenvector. then, $\BE = \BU \La \BU^{\rm H}$ where $\La$ is a diagonal matrix containing ordered eigenvalues of $\BE$ such that $\lambda_1\geq\lambda_2\geq\cdots\geq\lambda_{L+1}$. Let ${\bf{z}} = \BU^{\rm H}{\bf{v}}$, then the maximum eigenvalue of $\BE$ is given as
\begin{eqnarray}
\max_{||{\bf{v}}||_2=1} {\bf{v}}^{\rm H}\BE{\bf{v}} &=& \max_{||{\bf{z}}||_2=1} {\bf{z}}^{\rm H}\La{\bf{z}}\\
 &=& \max_{||{\bf{z}}||_2=1} \sum_{i=1}^{L+1}\lambda_i|{{z}}_i|^2\\
 &\leq& \max_{||{\bf{z}}||_2=1} \lambda_1\sum_{i=1}^{L+1}|{{z}}_i|^2 \\
 &\leq& \lambda_1 = \lambda_{\rm max}
\end{eqnarray}
The equality is attained when ${\bf{v}}$ is the eigenvector of $\lambda_{\rm max}$.
\end{proof}

\begin{lemma}
Given that $\BE$ has a positive maximum eigenvalue $\lambda_{\rm max}$  with corresponding unit-norm vector ${\bf{v}}$ of size $(L+1) \times 1$, then the maximum eigenvalue of $\BG_{(i)}$ in (\ref{eq:matrixGLS2}) is lower bounded by ${\bf{w}}^{\rm H} \BG_{(i)} {\bf{w}} = \rho\;\lambda_{\rm max}$ where
\begin{eqnarray}
{\bf{w}} = \begin{bmatrix}
  {\bf{v}}_{(L+1) \times 1} \\ {\bf{0}}_{i\times 1}
\end{bmatrix}
\end{eqnarray}
\end{lemma}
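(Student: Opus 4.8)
The plan is to prove the bound through the Rayleigh--Ritz variational characterization of the largest eigenvalue of a Hermitian matrix, using the proposed ${\bf{w}}$ as a single, explicitly chosen test direction. The first step is to record that $\BG_{(i)}$ is Hermitian. From the proof of the preceding lemma, $[\BI-\BD]$ is Hermitian (in fact an idempotent orthogonal projector), and $\BG_{(i)}$ has the congruence form $\BN^{\rm H}[\BI-\BD]\BN$ with $\BN=[\,\sqrt{\rho}\,{\rm diag}(\hat{\calX}_{(i)})\BA^{\rm H}_{(i)}\ \ \BI\,]$, so $\BG_{(i)}^{\rm H}=\BG_{(i)}$. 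Being Hermitian, its largest eigenvalue admits the representation
\[
\lambda_{\rm max}(\BG_{(i)}) = \max_{\|{\bf{x}}\|_2=1}{\bf{x}}^{\rm H}\BG_{(i)}{\bf{x}},
\]
and consequently every unit-norm vector furnishes a lower bound on $\lambda_{\rm max}(\BG_{(i)})$.

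The second step is to verify admissibility of ${\bf{w}}$ and to evaluate the associated quadratic form. Since ${\bf{v}}$ is unit-norm and the appended block is zero, $\|{\bf{w}}\|_2=\|{\bf{v}}\|_2=1$, so ${\bf{w}}$ is a legitimate test vector. Writing $\BG_{(i)}$ in the compact block form of (\ref{eq:matrixGLS3}) and carrying out the block multiplication, the vanishing bottom block of ${\bf{w}}$ annihilates the off-diagonal blocks $\BE_2,\BE_2^{\rm H}$ and the $\BE_3$ block, leaving only the $\rho\,\BE$ contribution, so that
\[
{\bf{w}}^{\rm H}\BG_{(i)}{\bf{w}} = \rho\,{\bf{v}}^{\rm H}\BE{\bf{v}}.
\]
Finally, because ${\bf{v}}$ is, by the preceding lemma, the unit-norm eigenvector of $\BE$ associated with $\lambda_{\rm max}$, we have $\BE{\bf{v}}=\lambda_{\rm max}{\bf{v}}$ and hence ${\bf{v}}^{\rm H}\BE{\bf{v}}=\lambda_{\rm max}$. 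Combining the three displays yields $\lambda_{\rm max}(\BG_{(i)})\ge {\bf{w}}^{\rm H}\BG_{(i)}{\bf{w}}=\rho\,\lambda_{\rm max}$, which is exactly the assertion.

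This is essentially a one-line Rayleigh-quotient bound, so there is no genuine analytical obstacle; the only points demanding care are bookkeeping in nature. I would make sure to state the Hermitian property of $\BG_{(i)}$ explicitly before quoting the variational formula, since the $\max$-characterization is otherwise unavailable, and I would double-check that the zero padding in ${\bf{w}}$ exactly cancels the off-diagonal coupling $\BE_2$ — it is precisely this cancellation that isolates the $\rho\,\BE$ block and produces the factor $\rho$. It is worth emphasizing that the lemma only claims a lower bound, not the exact value of $\lambda_{\rm max}(\BG_{(i)})$; this is all that is needed downstream, because a single $\rho$-scaled direction already forces $\lambda_{\rm max}(\BG_{(i)})\to\infty$ as $\rho\to\infty$ and hence, via the determinant bound (\ref{eq:P_ISimplfiedLS}), drives $P_i\to 0$.
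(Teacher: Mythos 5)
Your proposal is correct and takes essentially the same route as the paper: both arguments rest on the observation that $\rho\,\BE$ is the leading principal submatrix of the Hermitian positive semi-definite matrix $\BG_{(i)}$, so its largest eigenvalue cannot exceed that of $\BG_{(i)}$. The only difference is one of packaging: the paper cites the principal-submatrix (interlacing) eigenvalue inequality from a reference, whereas you prove that inequality inline via the Rayleigh--Ritz quotient with the zero-padded test vector ${\bf{w}}$ --- which has the minor added benefit of explicitly verifying the equality ${\bf{w}}^{\rm H}\BG_{(i)}{\bf{w}}=\rho\,\lambda_{\rm max}$ asserted in the lemma statement.
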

\begin{proof}
From Lemma $1$, the largest eigenvalue of $\BE$ is $\lambda_{\rm max}$. It follows that the largest eigenvalue of $\rho\BE$ is $\rho\lambda_{\rm max}$. Let $\lambda_{\rm max}'$ be the largest eigenvalue of $\BG_{(i)}$. From (\ref{eq:matrixGLS3}), we can see that $\rho\BE$ is a principal sub-matrix of $\BG_{(i)}$ (see \cite{Book:Meyer}, Chapter $7$, pp. $494$) and thus
\begin{equation}
\lambda_{\rm max}' \geq \rho\lambda_{\rm max}
\end{equation}
i.e., the largest eigenvalue of the principal sub-matrix $\rho\BE$ is smaller than or equal to the largest eigenvalue of $\BG_{(i)}$ (see \cite{Book:Meyer}, Chapter $7$, pp. $551$-$552$). Thus $\rho\lambda_{\rm max}$ is a lower bound on the largest eigenvalue of $\BG_{(i)}$.
\end{proof}

Note that $\BSigma_i$ is positive definite as it is a covariance matrix, hence it will have positive eigenvalues. From Lemma $2$, the maximum eigenvalue of $\BG_{(i)},\;\; \lambda_{\rm max}' \rightarrow \infty$ as $\rho \rightarrow \infty$. Thus the denominator in (\ref{eq:P_ISimplfiedLS}) grows to infinity in the limit $\rho \rightarrow \infty$ and
\begin{equation}
\lim_{\rho \rightarrow \infty} P_i \rightarrow 0 \label{eq:LSP_itendsTo0}
\end{equation}
From (\ref{Cii}) and (\ref{eq:LSP_itendsTo0}), we have
\begin{eqnarray}
\lim_{\rho \rightarrow \infty}C_i &\leq& 1+(|\Omega|^i - 1)\lim_{\rho \rightarrow \infty} P_i \\
\lim_{\rho \rightarrow \infty}C_i &\leq& 1
\label{CiiFinal}
\end{eqnarray}

\subsection{MAP cost}
The cost associated with the MAP solution of an erroneous sequence of symbols $\bar{\calX}_{(i)} \neq \hat{\calX}_{(i)}$ is given as (see \cite{Book:Sayed2003}, Chapter $11$, pp. $672$)
\begin{equation}
{M}_{\bar{\mathcal{X}}_{(i)}} = \calY_{(i)}^{\rm H}\Big(\mbox{{\emph{\textbf{I}}}} + \rho\;{\rm diag}(\bar{\calX}_{(i)}) \BA_{(i)}^{\rm H} \mbox{{\emph{\textbf{R}}}}_h \BA_{(i)}{\rm diag}(\bar{\calX}_{(i)}^{\rm H})\Big)^{-1} \calY_{(i)}
\end{equation}
Mathematically,
\begin{eqnarray}
\nonumber P_i &=&\mbox{Pr} ({M}_{\bar{\mathcal{X}}_{(i)}} \leq r) \\ \nonumber
P_i &=& \\ && \nonumber \hspace{-4em}\mbox{Pr} \bigg( \calY_{(i)}^{\rm H}\Big(\mbox{{\emph{\textbf{I}}}} + \rho\;{\rm diag}(\bar{\calX}_{(i)}) \BA_{(i)}^{\rm H} \mbox{{\emph{\textbf{R}}}}_h \BA_{(i)}{\rm diag}(\bar{\calX}_{(i)}^{\rm H})\Big)^{-1} \calY_{(i)} \leq r\bigg).\\ \label{P_i_inter1}
\end{eqnarray}
By matrix inversion lemma
\begin{eqnarray}
\nonumber && \left(\mbox{{\emph{\textbf{I}}}} + \sqrt{\rho}\;{\rm diag}(\bar{\calX}_{(i)}) \BA_{(i)}^{\rm H} \mbox{{\emph{\textbf{R}}}}_h \BA_{(i)}{\rm diag}(\bar{\calX}_{(i)}^{\rm H})\right)^{-1} \\ \nonumber
&=& \mbox{{\emph{\textbf{I}}}} - \rho\; {\rm diag}(\bar{\calX}_{(i)}) \BA_{(i)}^{\rm H} \Big[ \mbox{{\emph{\textbf{R}}}}_h^{-1} + \\ &&   \rho\;\BA_{(i)}{\rm diag}(\bar{\calX}_{(i)}^{\rm H}){\rm diag}(\bar{\calX}_{(i)}) \BA_{(i)}^{\rm H} \Big]^{-1} \BA_{(i)}{\rm diag}(\bar{\calX}_{(i)}^{\rm H})   \\
\nonumber &=& \mbox{{\emph{\textbf{I}}}} - {\rm diag}(\bar{\calX}_{(i)}) \BA_{(i)}^{\rm H} \Big[ \frac{1}{\rho}\; \mbox{{\emph{\textbf{R}}}}_h^{-1} + \\ && \nonumber \BA_{(i)}{\rm diag}(\bar{\calX}_{(i)}^{\rm H}){\rm diag}(\bar{\calX}_{(i)}) \BA_{(i)}^{\rm H} \Big]^{-1} \BA_{(i)}{\rm diag}(\bar{\calX}_{(i)}^{\rm H})  \\
&=& \mbox{{\emph{\textbf{I}}}} - \BD \label{P_i_inter2}
\end{eqnarray}
where
\begin{eqnarray}
\nonumber \BD &=& \\ \nonumber && \hspace{-3.75em}{\rm diag}(\bar{\calX}_{(i)}) \BA_{(i)}^{\rm H} \Big[ \frac{1}{\rho}\; \mbox{{\emph{\textbf{R}}}}_h^{-1} +  \BA_{(i)}{\rm diag}(\bar{\calX}_{(i)}^{\rm H}){\rm diag}(\bar{\calX}_{(i)}) \BA_{(i)}^{\rm H} \Big]^{-1}\\ && \BA_{(i)}{\rm diag}(\bar{\calX}_{(i)}^{\rm H})\label{eq:DofMAP}
\end{eqnarray}
Thus (\ref{P_i_inter1}) can be written as
\begin{eqnarray}
P_i = \mbox{Pr} \bigg( \calY_{(i)}^{\rm H}\Big( \mbox{{\emph{\textbf{I}}}} - \BD \Big) \calY_{(i)} \leq r\bigg)\label{P_i_inter3}
\end{eqnarray}
note that (\ref{P_i_inter3}) is of the same form as (\ref{P_i_inter1LS}). The only difference in the LS and MAP costs is the presence of the term $\frac{1}{\rho}\; \mbox{{\emph{\textbf{R}}}}_h^{-1}$ in (\ref{eq:DofMAP}). Also note that this term depends on the inverse of the SNR. For low SNR, the inverse term in (\ref{eq:DofMAP}) is always invertible due to the regularization term. At high SNR, the effect of regularization fades and inverse term in (\ref{eq:DofMAP}) is invertible. At high SNR, i.e., $\rho\rightarrow \infty$, $\frac{1}{\rho}\; \mbox{{\emph{\textbf{R}}}}_h^{-1} \rightarrow 0$ and $\BD$ of (\ref{P_i_inter2}) takes the same form as that of LS cost leading to (\ref{CiiFinal}).


Table \ref{tablecomplexityOFDM} lists the estimated computational
cost for our blind algorithm in the high SNR regime. Since there
is no backtracking, the total number of iterations is $N$, which
explains our calculations in Table \ref{tablecomplexityOFDM}. It
thus follows that the total number of operations needed for our
algorithm is of the order $O(LN)$ in high SNR regime. The pilot
based approach for channel estimation needs to invert an
$(L+1)\times(L+1)$ matrix $($assuming we need $L+1$ pilots to
estimate a channel of length $L+1)$ with a complexity of the order
$O(L^2)$. Since the cyclic prefix is a fixed fraction of the OFDM
symbol $(L=N/m$ with $m$ typically set to $m=4$ or $8)$ we see
that the complexity of the two approaches become comparable in the
high SNR regime.

\begin{table}
  \centering
  \caption{Total computational cost of the ML blind and training based algorithms at high SNR}\label{tablecomplexityOFDM}
\begin{tabular}{|c|c|c|}
  \hline
  \textbf{Algorithm} & $\times$ & $+$\\
  \hline
  \hline
    & & \\
  Blind Algorithm & $(3L^2+11L+17)N$ & $(2L^2+5L+4)N$ \\
    & & \\
  \hline
    Blind algorithm & & \\
   with  & $(4L+13)N$ & $(2L+4)N$ \\
    carrier reordering & & \\
  \hline
   Training based & & \\
   algorithm \cite{VCT:Naffouri2002} & $4L^2+17L+13$ & $2L^2+6L+4$ \\
    & & \\
  \hline
\end{tabular}
\end{table}

\section{Simulation Results} \label{Sim2}

We consider an OFDM system with $N=16$, or $64$ subcarriers and a CP of length $L = \frac{N}{4}$. The uncoded data symbols are modulated using BPSK, $4$-QAM, or $16$-QAM. The constructed OFDM signal then passes through a channel of length $L+1$, which is assumed to be block fading (i.e., constant over one OFDM symbol but fades independently from one symbol to another) and whose taps follow an exponential decay profile ($E[|h(t)|^2]=e^{-0.2t}$).

\subsection{Bench marking}

We compare the performance of our algorithm against the following receivers
\begin{enumerate}
  \item the subspace-based\footnote{The block fading assumption is maintained for all simulations. However, for the subspace blind receiver of \cite{IEEETSP:Muquet2002} to work, the channel needs to stay constant over a sequence of OFDM symbols. For this particular receiver, the channel was kept fixed over $50$ OFDM symbols.} blind receiver of \cite{IEEETSP:Muquet2002},
  \item the sphere decoding based receiver of \cite{IEEETC:Cui2006},
  \item a receiver that acquires the channel through training with $L+1$ pilots and a priori channel correlation $\BR_h$ \cite{VCT:Naffouri2002},
  \item the ML receiver that acquires data through exhaustive search.
\end{enumerate}
The simulations are averaged over $500$ Monte-Carlo runs.

Figure \ref{Fig:BPSK_all_saq} compares the BER performance of our algorithm with the aforementioned algorithms for an OFDM system with $N=16$ subcarriers and BPSK data symbols. Note in particular that our blind algorithm outperforms both the subspace and sphere decoding algorithms and almost matches the performance of the exhaustive search algorithm for low and high SNR, which confirms the ML nature of the algorithm.

Figure \ref{Fig:4QAM_all_saq}, which considers the $4$-QAM case, shows the same trends observed for the BPSK case of Figure \ref{Fig:BPSK_all_saq}.

\begin{figure}[htb]
\begin{center}
\epsfxsize = 3.5 true in \epsfbox{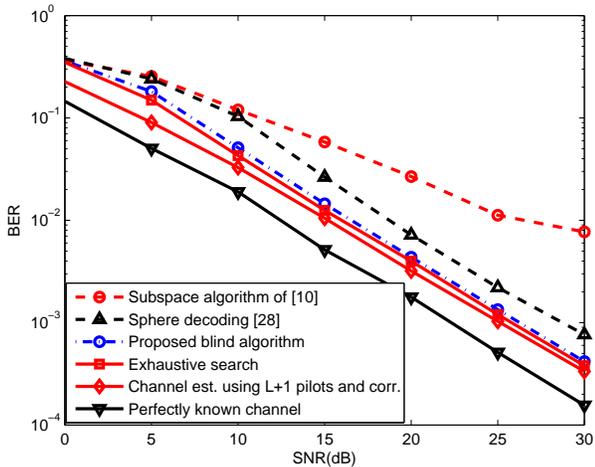}
\caption{\small BER vs SNR for BPSK OFDM over a Rayleigh channel
with $N=16$ and $L=3$} \label{Fig:BPSK_all_saq}
\end{center} \end{figure}

\begin{figure}[htb]
\begin{center}
\epsfxsize = 3.5 true in \epsfbox{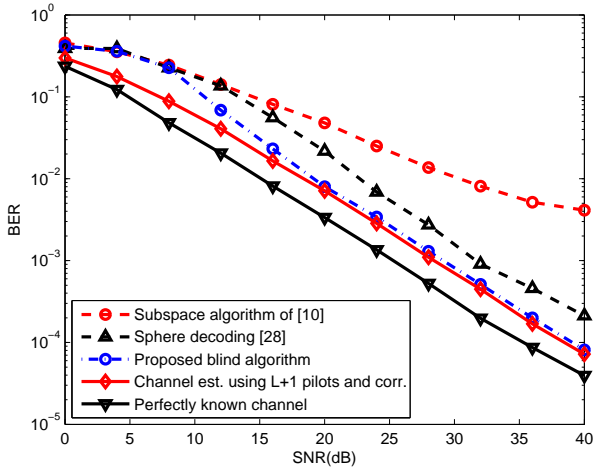}
\caption{\small BER vs SNR for $4$-QAM OFDM over a Rayleigh channel
with $N=16$ and $L=3$} \label{Fig:4QAM_all_saq}
\end{center} \end{figure}

Figure \ref{4QAM} considers a more realistic OFDM symbol length ($N=64$), drawn from a $4$-QAM constellation and allows the SNR to grow to $45$ dB. Our blind algorithm shows no error floor signs, which is characteristic of non-ML methods. Furthermore, the algorithm beats the training-based method and follows closely the performance of the perfect channel case. Figure \ref{16QAM64} shows the results with $N=64$ subcarriers and $16$-QAM data symbols for SNR as large as $50$ dB. Again,  the proposed blind algorithm does not reach an error floor.

\begin{figure}[htb]
\begin{center}
\epsfxsize = 3.5 true in \epsfbox{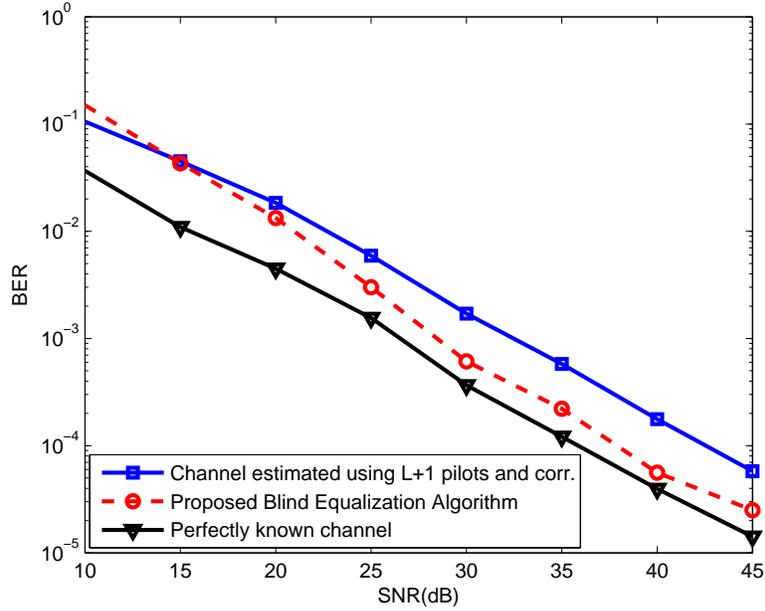}
\caption{\small BER vs SNR for $4$-QAM OFDM over a Rayleigh channel
with $N=64$ and $L=15$} \label{4QAM}
\end{center} \end{figure}

\begin{figure}[htb]
\begin{center}
\epsfxsize = 3.5 true in \epsfbox{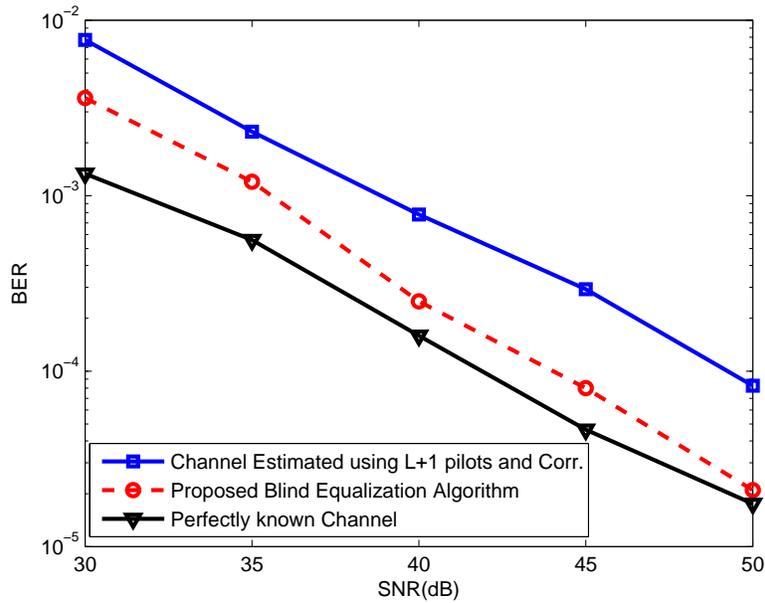}
\caption{\small BER vs SNR for $16$-QAM OFDM over a Rayleigh
channel with $N=64$ and $L=15$} \label{16QAM64}
\end{center} \end{figure}

\subsection{Low-Complexity Variations}

In this subsection, we investigate the low-complexity variants of our algorithm. Specifically, we consider the performance of the blind algorithm with
\begin{enumerate}
  \item ${\BP}_i$ set to $\II$,
  \item ${\BP}_i$ set to $\II$ with subcarrier reordering
\end{enumerate}
Figure \ref{Fig:BPSK_vari_saq} exhibits the comparisons for the various algorithms for BPSK and $N=16$. Note that with ${\BP}_i$ set to $\II$ arbitrarily, the performance of the blind algorithm deteriorates and the BER reaches an error floor. Contrast this with the algorithm variant that uses subcarrier reordering as well, and note that the performance of this variant follows closely the performance of the exact blind algorithm. Also note that the BER of both of these algorithms beats that of the sphere decoding algorithm of \cite{IEEETC:Cui2006}. The same trends are observed in Figure \ref{Fig:4QAM_vari_saq}, which considers the $4$-QAM case.

\begin{figure}[htb]
\begin{center}
\epsfxsize = 3.5 true in \epsfbox{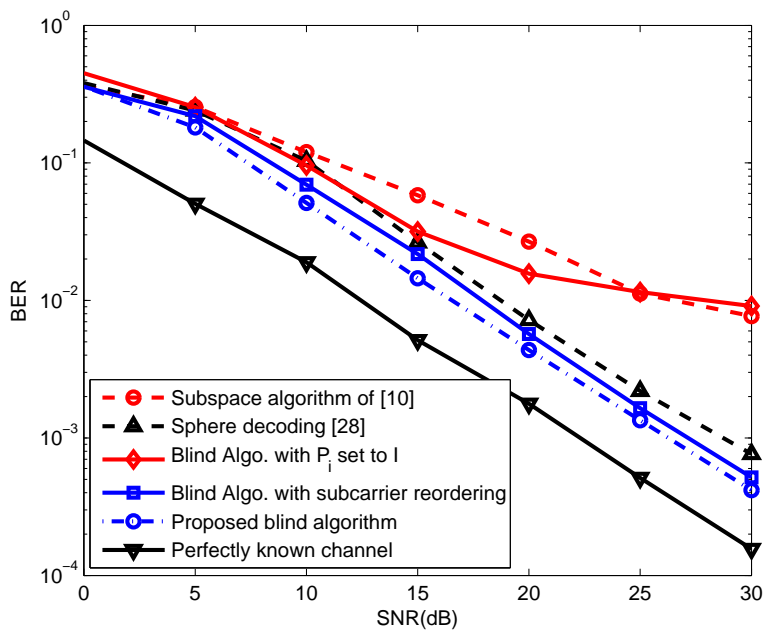}
\caption{\small Comparison of low-complexity algorithms for
BPSK OFDM with $N=16$ and $L=3$} \label{Fig:BPSK_vari_saq}
\end{center} \end{figure}

\begin{figure}[htb]
\begin{center}
\epsfxsize = 3.5 true in \epsfbox{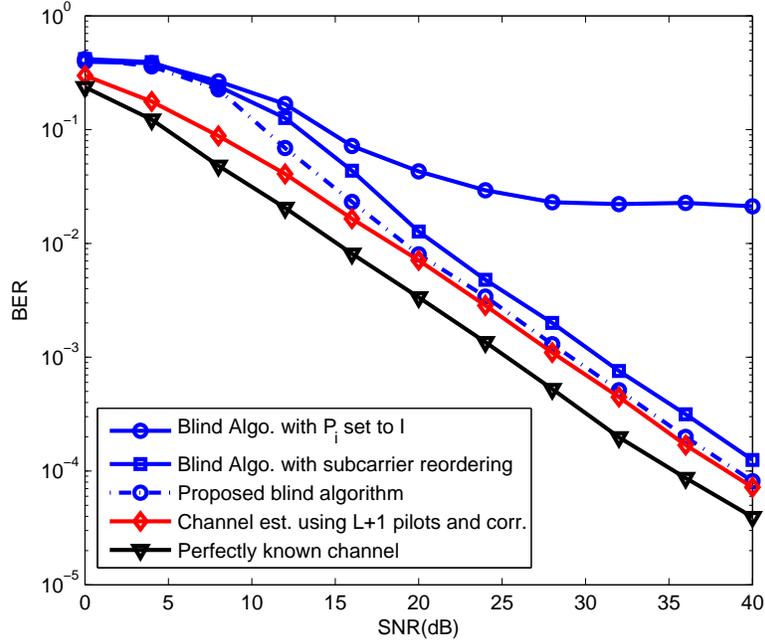}
\caption{\small Comparison of low-complexity algorithms for
$4$-QAM OFDM with $N=16$ and $L=3$} \label{Fig:4QAM_vari_saq}
\end{center} \end{figure}

Figure \ref{Fig:time_all_methods} compares the average runtime of various algorithms as a function of the SNR. Note first that the extreme cases are the training-based receiver and the exhaustive search receiver, both of which are independent of the SNR. The runtime of the proposed algorithm decreases with the SNR and is sandwiched in-between the run time of the sphere decoding algorithm and that of the subspace algorithm for all values of the SNR\footnote{The runtime of the subspace algorithm is adjusted to account for the fact that it requires the channel to be constant over a block of $L+1$ OFDM symbols.}. Note that in the high SNR regime our algorithm runs at the same speed as the subspace algorithm.

Figure \ref{Fig:time_all_modu} shows the average runtime of the proposed algorithm with $N=16$ for various modulation schemes (BPSK, $4$-QAM and $16$-QAM). It is clear from the figure that the average runtime decreases considerably at higher SNR values.

\begin{figure}[htb]
\begin{center}
\epsfxsize = 3.5 true in \epsfbox{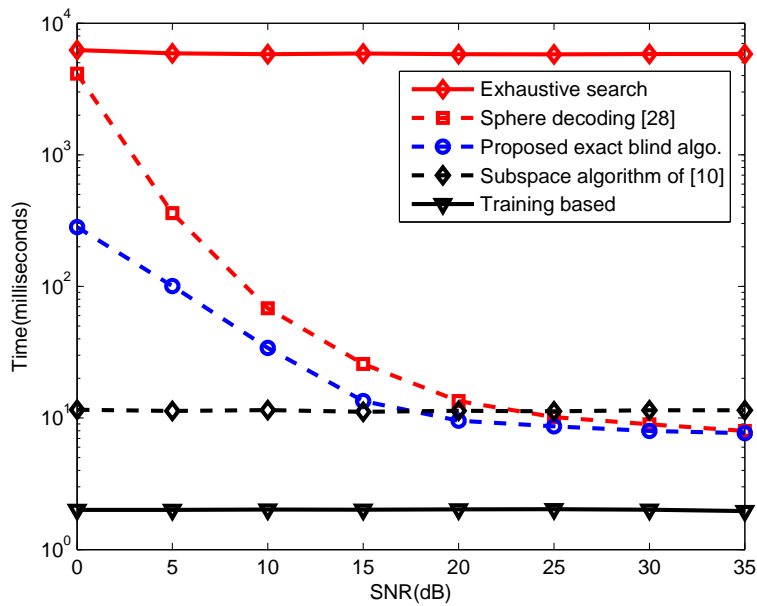}
\caption{\small Average time comparison for BPSK data symbols with $N=16$
and $L=3$} \label{Fig:time_all_methods}
\end{center}
\end{figure}
\begin{figure}[htb]
\begin{center}
\epsfxsize = 3.5 true in \epsfbox{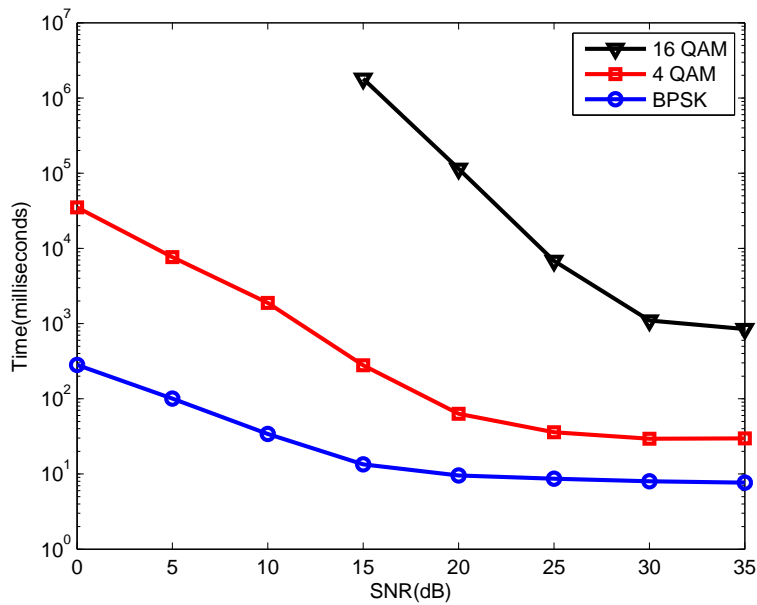}
\caption{\small Average Time Comparison for our Blind Algorithm
for Different Modulation with $N=16$ and $L=3$} \label{Fig:time_all_modu}
\end{center}
\end{figure}

\section{Conclusion}\label{con2}

In this paper, we have  proposed a low-complexity blind algorithm that is able to deal with channels that change on a symbol by symbol basis allowing it to deal with fast block fading channels. The algorithm works for general constellations and is able to recover the data from output observations only. Simulation results demonstrate the favorable performance of the algorithm for general constellations and show that its performance matches the performance of the exhaustive search for small values of $N$.

We have also proposed an approximate blind equalization method (avoiding $\BP_i$ with subcarrier reordering) to reduce the computational complexity. As evident from the simulation results, this approximate method performs quite close to the exact blind algorithm and can work properly without a priori knowledge of the channel statistics. Finally, we study the complexity of our blind algorithm and show that it becomes especially low in the high SNR regime.


\section*{Acknowledgment}
The author would like to acknowledge the support provided by the Deanship of Scientific Research (DSR) at King Fahd University of Petroleum \& Minerals (KFUPM) for funding this work through project No. FT$111004$.



\end{document}